\newcommand{\C}{\mathbb{C}}
\newcommand{\Z}{\mathbb{Z}}
\newcommand{\R}{\mathbb{R}}
\newcommand{\N}{\mathbb{N}}
\newcommand{\Q}{\mathbb{Q}}
\newcommand{\bG}{\mathsf{G}}
\newcommand{\bZ}{\mathsf{Z}}
\newcommand{\bv}{\mathsf{v}}
\newcommand{\vt}{\,\widetilde{\mathsf{v}}\,}
\newcommand{\bP}{\mathbb{P}}
\newcommand{\bT}{\mathsf{T}}
\newcommand{\bJ}{\mathsf{J}}
\newcommand{\bfG}{\mathbf{G}}
\newcommand{\bA}{\mathsf{A}}
\newcommand{\bU}{\mathsf{U}}
\newcommand{\cK}{\mathscr{K}}
\newcommand{\cL}{\mathscr{L}}
\newcommand{\cO}{\mathscr{O}}
\newcommand{\cT}{\mathscr{T}}
\newcommand{\cB}{\mathscr{B}}
\newcommand{\cV}{\mathscr{V}}
\newcommand{\cM}{\mathcal{M}}
\newcommand{\cU}{\mathscr{U}}
\newcommand{\cZ}{\mathscr{Z}}
\newcommand{\cW}{\mathscr{W}}
\newcommand{\fp}{\mathfrak{p}}
\newcommand{\fn}{\mathfrak{n}}
\newcommand{\fX}{\mathfrak{X}}
\newcommand{\fR}{\mathfrak{R}}
\newcommand{\bw}{\mathsf{w}}
\newcommand{\vac}{\mathsf{vac}}
\newcommand{\aroof}{\widehat{\mathsf{a}}}
\newcommand{\Ah}{\widehat{A}}
\newcommand{\cUl}{\cU_\hbar\big(\widehat{\widehat{\mathfrak{gl}}}_\ell\big)}
\newcommand{\bPi}{\boldsymbol{\Pi}}
\newcommand{\bdiota}{\boldsymbol{\iota}} 
\newcommand{\Hd}{{H}^{\raisebox{0.5mm}{$\scriptscriptstyle \bullet$}}}
\DeclareMathOperator{\rk}{rk}
\DeclareMathOperator{\cent}{center}
\DeclareMathOperator{\Obs}{Obs}
\DeclareMathOperator{\Lie}{Lie}
\DeclareMathOperator{\Aut}{Aut}
\DeclareMathOperator{\Hom}{Hom}
\DeclareMathOperator{\End}{End}
\DeclareMathOperator{\Pic}{Pic}
\DeclareMathOperator{\pt}{pt}
\DeclareMathOperator{\tr}{tr}
\DeclareMathOperator{\ev}{ev}
\DeclareMathOperator{\tev}{\widetilde{ev}}
\DeclareMathOperator{\Stab}{Stab}
\DeclareMathOperator{\Unstab}{Unstab}
\DeclareMathOperator{\Attr}{Attr}
\DeclareMathOperator{\Spec}{Spec}
\DeclareMathOperator{\Rep}{Rep}
\DeclareMathOperator{\Ker}{Ker}
\DeclareMathOperator{\supp}{supp}
\DeclareMathOperator{\Hilb}{Hilb}
\newcommand{\Ld}{{\Lambda}^{\!\raisebox{0.5mm}{$\scriptscriptstyle
      \bullet$}}\!}
\newcommand{\Ldi}{{\Lambda}^{\!\raisebox{0.5mm}{$
\scriptstyle \diamond$}}}
\newcommand{\be}{\mathbf{e}}
\newcommand{\QM}{\mathsf{QM}}
\newcommand{\Ct}{\mathbb{C}^\times}
\newcommand{\tO}{\widehat{\mathscr{O}}}
\newcommand{\vir}{\textup{vir}}
\newcommand{\cF}{\mathscr{F}}
\newcommand{\fh}{\mathfrak{h}} 
\newcommand{\fg}{\mathfrak{g}} 
\newcommand{\fgh}{\widehat{\mathfrak{g}}} 
\newcommand{\fhh}{\widehat{\mathfrak{h}}} 
\newcommand{\fgl}{\widehat{\mathfrak{gl}}}
\newcommand{\fb}{\mathbf{f}}
\newcommand{\bp}{\mathbf{p}}
\newcommand{\bs}{\mathbf{s}}
\newcommand{\Be}{\mathfrak{Be}}
\newcommand{\Zb}{\overline{\bZ}}
\newtheorem{Proposition}{Proposition} 
\newtheorem{Lemma}{Lemma} 
\newtheorem{Theorem}{Theorem}
\newtheorem{Definition}{Definition} 
\newcommand{\rdd}{/\!\!/}
\begin{document}

\title{Quasimap counts and Bethe eigenfunctions}
\author{Mina Aganagic and Andrei Okounkov}
\date{} 
\maketitle

\abstract{
We associate an explicit equivalent 
descendent insertion to any 
relative insertion in quantum K-theory of Nakajima varieties. 

This also serves as an explicit formula for 
off-shell Bethe eigenfunctions for general quantum loop 
algebras associated to quivers and gives the 
general integral solution to the corresponding quantum 
Knizhnik-Zamolodchikov and dynamical 
$q$-difference equations.}

\section{Introduction}

\subsection{Overview} 

\subsubsection{} 

The problem solved in this paper has a 
representation-theoretic side and a geometric side. 

In representation theory of quantum affine algebras, and its 
applications to exactly solvable models of mathematical physics, 
a very important role is played by certain $q$-difference 
equations. These are the quantum Knizhnik-Zamolodchikov 
equations (qKZ),
see \cites{FR,EFK} and the corresponding commuting dynamical equations
\cites{EV1,EV,FMTV,FTV,TV,TV3}. A lot of 
research has been focused on solving these equations by 
integrals of Mellin-Barnes type, see e.g.
\cites{EFK,Matsuo, Resh,TV1,TV2,TV4,TV5}. 
Such integrals, in particular,
give explicit formulas for
Bethe eigenvectors in the stationary phase $q\to 1$ 
limit. Here we give a general integral solution for 
tensor products of evaluation representations of quantum 
affine Lie algebras associated to quivers as in \cite{MO}. 
These include, in particular, double loop algebras of the form 
$\cUl$, which are known under many different names and play 
a very important role in many branches of modern mathematical 
physics, see \cite{NeTh} for a detailed introduction and further 
references. 

For us, these representation-theoretic problems are reflections of
certain geometric questions about enumerative 
K-theory of quasimaps to Nakajima quiver varieties
(see \cite{pcmi} for an introduction). In mathematical 
physics, Nakajima varieties appear in 
supersymmetric gauge theories as Higgs branches of moduli of 
vacua, and K-theoretic quasimaps counts may be interpreted
as Higgs branch computations of 3-dimensional 
supersymmetric indices\footnote{while the Mellin-Barnes 
integrals may be interpreted as the equivalent 
Coulomb branch computations, see e.g.\ \cite{afo} for 
further discussion.}. Nekrasov and Shatashvili 
\cites{NS1,NS2} were the first to make the connection between 
these indices and Bethe equations, see also 
\cite{NekVid1}. The actual problem solved
here is to associate an explicit equivalent descendent insertion to any 
relative insertion in enumerative K-theory of 
quasimaps to Nakajima varieties, see below and \cite{pcmi} for 
an explanation of these terms. 

Our results are
complementary to the recent important work of Smirnov 
\cite{S2} who associates an equivalent relative insertion 
to any descendent insertion in terms of a certain graphical 
calculus and canonical tensors associated to the quantum group.
Here we allow a wider supply of descendent insertions, and 
get a simple formula (with an arguably simpler proof) for a map 
going in the opposite direction.

For quivers of affine ADE type, 
quasimap counts compute the K-theoretic Donaldson-Thomas
invariants of threefolds fibered in ADE 
surfaces\footnote{Those include local curves, that is, threefolds
fibered in $A_0 = \C^2$.}. Finding an equivalence between 
relative and descendent insertions in Donaldson-Thomas 
theories of threefolds is a well-known problem of crucial technical 
importance for the developments of the theory, see 
\cite{mnop2} for an early discussion and \cites{PP1,PP2} for 
major further progress in cohomology. 
Our formulas are both more explicit and work in 
K-theory\footnote{Equivariant K-theory is similarly the natural setting of
  Smirnov's formulas \cite{S2}.}.

\subsubsection{}

Let $\fg$ be a Lie algebra associated to a quiver 
with a vertex set $I$ 
as in \cite{MO}. For example, modulo center, $\fg$ is the corresponding 
simple Lie algebra for quivers of finite ADE type and 
$\fg=\widehat{\mathfrak{gl}}_\ell$ for the cyclic quiver 
$\widehat{A}_{\ell-1}$ with $\ell$ vertices. 

Extending the work of 
Nakajima \cite{Nak3}, tensor products of fundamental evaluation 
representations $F_i(a)$, $i\in I$, of the the corresponding 
quantum loop algebra $\cU_\hbar(\fgh)$ may be realized 
geometrically using equivariant K-groups of Nakajima quiver 
varieties \cites{MO,OS}.

Let $X=\cM(\bv,\bw)$ be a Nakajima variety indexed
by dimension vectors $\bv,\bw \in \N^I$ and let $\bT$ be a
 torus of automorphisms of $X$. It scales the canonical 
symplectic form $\omega$ on $X$ and
$$
\hbar = \textup{weight of $\omega$} \in K_\bT(\pt) 
$$ 
is the deformation parameter in $\cU_\hbar(\fgh)$. We set 
$\bA = \Ker \hbar$ and assume that $\bA$ contains the 
torus 
\begin{equation}
\bA \supset \left\{\begin{pmatrix}
  a_{i1} \\
& \ddots \\
&& a_{i\bw_i} 
\end{pmatrix} \right\} 
\subset \prod GL(W_i) \subset \Aut(X) \label{torus A}
\end{equation}
acting on the framing spaces $W_i$ of the 
quiver.

A certain integral 
form of $\cU_\hbar(\fgh)$ acts by correspondences between 
equivariant K-theories of Nakajima varieties so that 
\begin{equation}
  \label{KXw}
  K_\bT(X) \otimes_{K_\bT(\pt)} \textup{field} \cong 
\left( \bigotimes_{i\in I} \bigotimes_{j=1}^{\bw_i}
  F_i(a_{i,j})\right)
_{\textup{weight}=\bv}
\end{equation}
where the weight is with respect to the Cartan subalgebra 
$\fh \subset \fg$ acting by linear
 function of $\bw$ and $\bv$.

\subsubsection{}

Quantum Knizhnik-Zamolodchikov equations of I.~Frenkel and
N.~Reshetikhin \cites{FR,EFK} are certain canonical $q$-difference equations 
for a function of the variables $a_{ij}$ in \eqref{torus A} 
with values in the vector space \eqref{KXw}. 
The shift $q\in \Ct$ here is a free parameter related to the loop-rotation 
automorphism of $\cU_\hbar(\fgh)$. In the original setup of 
\cite{FR}, qKZ equations appeared as difference equation for 
conformal blocks of $\cU_\hbar(\fgh)$ at a fixed level and there was 
a relation between $q$, the deformation parameter
$\hbar$, and the level. 
The geometric meaning of $q$ will be explained below. 

As a
parameter, qKZ equations take 
$$
z \in \bZ=\textup{group-like elements of $\cU_\hbar(\fgh)$} \, \Big/ \, 
\textup{center} 
$$
or, equivalently, of the torus corresponding to the $\bv$-part in
$\fh$. The monomials $z^\bv$ are the characters $\bZ$.

Compatible systems of $q$-difference equations in $z$ were 
studied in detail by Etingof, Tarasov, Varchenko, and others in 
the case of finite-dimensional
 algebras $\fg$, see \cite{TV} and 
also for example \cites{EV1,EV,FMTV,FTV,TV1,TV2,TV3,TV4,TV5}. 
In particular, for finite-dimensional $\fg$, the 
commuting equations were understood in terms of the lattice part in 
the dynamical quantum affine Weyl group of $\cU_\hbar(\fgh)$ in
\cite{EV}. These \emph{dynamical} difference equations 
 are intrinsic to $\cU_\hbar(\fgh)$ and make sense in 
an arbitrary weight space even in the absence of tensor product 
structure and associated qKZ equations. 

For general $\fg$, the dynamical difference equations were constructed
in \cite{OS}. 

\subsubsection{}

First Chern classes of tautological bundles give a natural map 
$$
\Z^I \to H^2(X,\Z)
$$
which is known to be surjective \cite{mn}. The dual map sends the group algebra of $H_2(X,\Z)$ to $\C[\bZ]$ 
and makes the monomials $z^\bv$ degree
labels for curve counts in $X$. The 
variables $z$ are known as the K\"ahler variables
for $X$ in the parlance of enumerative
geometry. The so-called K\"ahler moduli space is, in the case of 
Nakajima varieties, a certain toric compactification $\Zb \supset
\bZ$. 

 With the identification
\eqref{KXw}, the qKZ and dynamical equations become the 
quantum difference equations in enumerative K-theory of quasimaps 
to $X$ \cites{pcmi,OS}. 
These $q$-difference equations shift the 
equivariant variables $a$ and the K\"ahler variables $z$ by the 
fundamental weight $q$ of the group 
$$
\Ct_q = \Aut(\bP^1,0,\infty) 
$$
that acts on the moduli spaces of quasimaps 
$$
\QM(X) = 
\left\{ 
f: \bP^1 \dasharrow X\right\} \big/ \cong 
$$ by automorphisms 
of the domain, see \cite{pcmi} for an introduction.  

\subsubsection{}

While the natural evaluation map 
$$
\QM(X) \owns f \mapsto (f(0),f(\infty)) \in \fX\times \fX 
$$
only goes to the stack quotient 
\begin{equation}
\fX = 
\left[
\frac{\textup{prequotient}}{G}
\right] \supset 
\frac{\textup{stable locus}}{G}=  X\label{Xstack},
\end{equation}
one can impose constraints on $f$ or modify the moduli spaces to turn 
enumerative counts into correspondences on $X$, or correspondences
between $\fX$ and $X$. Conditions imposed at $0,\infty
\in \bP^1$ are customary called \emph{insertions}, just like
insertions in functional integrals.

K-theoretic counts of quasimaps with different insertions at
$0,\infty \in \bP^1$ give objects  
of different nature as functions of $a$, $z$, and other parameters. 
For certain insertions, we get  a fundamental 
solutions of the quantum difference equations, while for other 
insertions we get integrals of 
Mellin-Barnes type. 

\subsubsection{}
By an integral of Mellin-Barnes type we mean an integral of the form 
\begin{equation}
I_{\alpha\beta}(z,\dots) = \int_{\gamma \subset T_G/W_G} 
 \fb_\alpha(x) \, \mathbf{g}_\beta(x) \, \be(x,z) \, 
\prod \frac{\phi(x^{\lambda_i} b_i)}{\phi(x^{\lambda_i} c_i)} \,
\prod \frac{dx_k}{2\pi i x_k}  \label{IMB}
\end{equation}
up to multiplicative shift\footnote{The exact 
form of this multiplicative shift, which is of no importance 
here, is discussed in the Appendix.} in $z$, where
\begin{itemize}
\item the integration is over a middle-dimensional cycle in the
  quotient of a torus $T_G$ by a finite group $W_G$. Concretely, 
$T_G\subset G$ is a maximal torus of the group $G$ in \eqref{Xstack}, 
with Weyl group $W_G$. Geometrically, the coordinates on $T_G/W_G$ are
the characteristic classes of the universal bundles on $X$. Since
these are known to span the K-theory of $X$ \cite{mn}, we have a natural 
embedding
\begin{equation}
\label{SpecK}
\xymatrix{
\Spec K_\bT(X) \ar@{^(->}^{\iota}[r]  & \bT \times T_G/W_G  \ar[d]_{\pi_\bT} \\
& \bT 
}
\end{equation}
finite over the torus $\bT$ of equivariant parameters. The variables
in $\bT$ including $\hbar$ and $a$ are parameters in \eqref{IMB}
and the integral should be viewed as an integral in the fibers of 
the projection $\pi_\bT$. 
\item 
the cycle $\gamma$ extracts the residues of the integrand at  
$q$-translates of the pole at the image of $\iota$ in \eqref{SpecK}. 
\item the function 
$$
\phi(y) = \prod_{n=0}^\infty (1-q^n y) 
$$
solves the simplest $q$-difference equation and replaces the
reciprocal of the $\Gamma$-function in the $q$-world. Ratios of 
the form 
$\frac{\phi(x^{\lambda} b)}{\phi(x^{\lambda} c)}$ generalize
complex powers of linear forms ubiquitous in hypergeometric 
integrals. Instead of hyperplanes, we have translates of codimension
1 subtori in $\bT\times T_G$. 
\item
the weights $\lambda_i$ and the shifts $b_i,c_i$
 involve the roots of $G$ and the weights of 
$\bT\times G$ action on the 
prequotient in \eqref{Xstack}. For Nakajima varieties, 
\eqref{Xstack} is an algebraic symplectic reduction of a 
cotangent bundle, and  the self-duality of this setup 
implies 
$$
\{b_i,c_i\} = \{q t^{\nu_i}, \hbar t^{\nu_i}\} 
$$
for a certain weight $t^{\nu_i}$ of $\bT$ on the prequotient in 
\eqref{Xstack}. 
\item 
the function 
\begin{equation}
  \label{be}
  \be(x,z) = \exp \left( (\ln q)^{-1} \sum_{i,k} \ln x_{i,k} \, \ln
    z_i \right) 
\end{equation} 
where the coordinate $x_{i,k}$ are grouped according to 
$G = \prod_{i\in I} GL(\bv_i)$ solves monomial 
$q$-difference equations in $x$ and $z$ 
and makes the integral \eqref{IMB} 
a $q$-difference analog of Fourier or Mellin transform. 
\item
the function $\mathbf{g}_\beta(x)$ is an elliptic function on $x$ 
(that is, a constant, from the viewpoint of $q$-difference equations) 
regular at the location of $\gamma$. It is convenient to use a 
suitable basis of such functions as a mechanism to generate a basis 
in the $\rk K(X)$-dimensional space of solutions of the quantum 
difference equations. 

{}From the perspective of \cite{afo,ese}, see in particular 
Section 6.2 in \cite{ese} and 
Section 5.4 in \cite{afo} for detailed examples, it is natural to use \emph{elliptic} stable envelopes to build functions 
$\mathbf{g}_\beta(x)$. Our focus in this paper, however, is on the
functions $\fb_\alpha(x)$, and their relations to \emph{K-theoretic} 
stable envelopes.  
\item 
the \emph{Bethe subscheme} 
$$
\Be = \left\{\frac{\partial}{\partial x} \cW =0 \right\}  
\subset \Zb \times \bT \times T_G/W_G 
$$
where\footnote{The function $\cW$ is known as the Yang-Yang function.}  
\begin{equation}
\cW = \lim_{q\to 1} \ln(q) \ln \left( \be(x,z) \, 
\prod \frac{\phi(x^{\lambda_i} b_i)}{\phi(x^{\lambda_i}
  c_i)}\right)\label{cW}
\end{equation}
appears as the critical points of the integral in the $q\to 1$ limit. 
It is the joint spectrum of the corresponding commuting operators on 
$K_\bT(X)$ and the map 
$$
K_\bT(X) \owns \alpha \mapsto \fb_\alpha \mapsto \C[\Be]
$$
gives the Jordan normal form of the $\C[\Be]$-action on $K_\bT(X)$. 
The fiber of $\Be$ over $0\in \Zb$ is the spectrum of 
K-theory of $X$ in \eqref{SpecK}. The concrete form of Bethe equations is recalled in the 
Appendix.

The connection between Bethe equations and quiver
gauge theories whose Higgs branch is $X$ is one of the main points of a very influential 
sequence of papers by Nekrasov and Shatashvili, see \cites{NS1,NS2}. 
\item
finally, the function $\fb_\alpha(x)$ is a rational function of $x$
that depends linearly on $\alpha\in K_\bT(X)$ and restricts to 
$\alpha$ on the image of $\iota$ in \eqref{SpecK}. It is known 
under various names including ``off-shell Bethe eigenfunction'' 
and ``weight function''. This function $\fb_\alpha(x)$ will be 
the most important player in this paper. 
\end{itemize}

Partition functions of supersymmetric gauge theories can be often expressed as
integrals of the general form \eqref{IMB}, see e.g.\ \cite{MNS,Ninst} 
for prominent examples of such computation. The group $G$ in this
case is the complexification of the gauge group
and integration corresponds, via Weyl integration formula, to
extracting invariants of constant gauge 
transformations.\footnote{Alternatively, the quotient $T_G/W_G$ is 
closely related to the Coulomb branch of vacua of the theory and 
the integral \eqref{IMB} may be interpreted as an equivalent direct 
computation on the Coulomb branch.} See e.g.\ 
\cite{afo} for and introductory mathematical discussion and an
explanation of how  integrals 
of the form \eqref{IMB} appear in enumerative 
theory of quasimaps to $X$ with \emph{descendent} insertions. 
See also e.g.\ 
\cite{PSZ} for a detailed discussion of the Nekrasov-Shatashvili 
connection between Bethe 
equation and enumerative theory of quasimaps that does not make 
an explicit use of Mellin-Barnes integrals.

\subsubsection{}

The space of possible descendent insertions at $0\in \bP^1$ 
$$
\left\{\begin{matrix}
   \textup{descendent}\\
\textup{insertions} 
  \end{matrix} \right\} = K_{\bT \times G}(\pt) = \Z[\bT \times T_G/W_G] 
$$
corresponds to all possible Laurent polynomials $f_\alpha(x)$ 
in \eqref{IMB}. A choice of $\mathbf{g}_\beta$
 corresponds to a \emph{nonsingular} insertion at $\infty\in \bP^1$. There is a third flavor of insertions, called \emph{relative} and they 
take a class $\alpha\in K_\bT(X)$ as an input. This is explained in
Section \ref{s_insertions} and, in more details, in \cite{pcmi}.

By a geometric 
argument, K-theoretic count of quasimaps with a relative insertion 
at $0$ and a nonsingular insertion at $\infty$ gives a fundamental 
solution of the quantum difference equations, see Section 8 in
\cite{pcmi} for details.

\subsubsection{}
In this paper, we will describe a linear map 
\begin{equation}
  \label{afa}
  \left\{\begin{matrix}
    \textup{relative}\\
\textup{insertions} 
  \end{matrix} \right\}= 
K_\bT(X) \owns \alpha \mapsto \fb_\alpha \in \Q(\bT \times T_G/W_G) 
= \left\{\begin{matrix}
\textup{localized}\\
    \textup{descendent}\\
\textup{insertions} 
  \end{matrix} \right\}
\end{equation}
that \emph{preserves} K-theoretic counts, and therefore makes the 
Mellin-Barnes integral \eqref{IMB} a solution of the quantum 
difference equations. Among all quasimaps, there are 
degree zero, that is, constant quasimaps, which means 
\begin{equation}
  \label{fbres}
  \fb_\alpha \big|_{K_\bT(X)} = \alpha 
\end{equation}
in the diagram \eqref{SpecK}. 

In \eqref{afa}, we allow only very specific denominators 
\begin{equation}
\fb_\alpha = \frac{\bs_\alpha}{\Delta_\hbar} \,, 
\quad \bs_\alpha \in \Z[\bT \times T_G/W_G]\,, \label{fsa}
\end{equation}
where $\Delta_\hbar$ is the Koszul complex for the 
moment map equations for $X$, that is, 
\begin{align}
 \Delta_\hbar &= \sum_{k} (-\hbar)^k \Lambda^k \Lie(G)  
\notag \\
   &= \prod_i \prod_{k,l} (1- \hbar x_{i,k}/x_{i,l})
 \label{Dh}
 \end{align}
with the coordinates $x_{i,k}$ grouped as in \eqref{be}.

The numerator $\bs_\alpha$ of $\fb_\alpha$
 is such 
that the counts are still defined in \emph{integral}, that is, 
nonlocalized K-theory. This integrality is crucial and the 
geometric 
mechanism responsible for it will be explained in 
Section \ref{s_int}. In particular, we will make precise the 
mechanism of restriction \eqref{fbres} of a rational 
function to a locus that may be contained in the divisor 
of poles.

\subsubsection{}

The denominator in the correspondence \eqref{afa} is what 
differentiates our approach from other results in the literature, 
notably from a very general result of Smirnov \cite{S2} who gives 
a map 
\begin{equation}
\left\{\begin{matrix}
    \textup{descendent}\\
\textup{insertions} 
  \end{matrix} \right\} 
\to 
K_\bT(X)\otimes \Q(z,q)  = 
\left\{\begin{matrix}
    \textup{relative}\\
\textup{insertions} 
  \end{matrix} \right\} \otimes \Q(z,q)\label{Smi}
\end{equation}
which preserves K-theoretic counts. Restricted to $z=0$, 
the map \eqref{Smi} is the pullback $\iota^*$ in \eqref{SpecK} 
and hence any set of tautological classes that forms a basis of
$K_\bT(X)$ can be used to write integral 
formulas for solutions of quantum difference equations.

\subsubsection{}

Our main result, Theorem \ref{t1} in Section \ref{s_t1},  is an
equivalence 
between a relative
insertion $\alpha$ and the corresponding insertion $\fb_\alpha$ in 
enumerative theory of quasimaps to $X$.  

For $\fb_\alpha$ we give 
a simple formula in terms of K-theoretic 
stable envelopes, see Definition 
\ref{d1} in Section \ref{s_d1}. A representation-theoretic 
translation of this formula is given is \eqref{s=stab} and 
\eqref{fbpart} in Section \ref{sR1}, see also Section \ref{sR2}. 
An introduction 
to K-theoretic stable envelopes may be found in 
\cite{pcmi}. 

An interesting feature of our formula 
for $\fb_\alpha$ is that is does \emph{not} depend on 
variables $z$ or $q$, in marked contrast to \eqref{Smi}.

As an special case, we give explicit formulas for $\fb_\alpha$ for cyclic 
quivers $\Ah_\ell$, that is, for the quantum double loop 
algebras $\cUl$, see Section \ref{cUl}. These formulas can 
be seen as an instance of an abelianization formula for stable 
envelopes in the 
style of \cite{Sh,S1}.  We make the formulas particularly 
explicit in the important case of the Hilbert scheme of points 
in $\C^2$ in Section \ref{s_Hilb}.


\subsubsection{} 

For $\fg=\mathfrak{gl}_n$, our formulas specialize, with a very 
different proof, to integrals studied by Tarasov and Varchenko
\cites{TV,TV1,TV2,TV3,TV4,TV5}.  A connection between what they call the weight function 
and stable envelopes was observed, in this instance, in the papers
\cites{RTV1,RTV2}. These  papers 
were an important source of inspiration for the 
work presented here. 

For $\fg=\widehat{\mathfrak{gl}}_1$, Bethe eigenvectors are obtained 
in \cite{FJMM} in the shuffle algebra realization. Presumably, these
formulas may be extended to $\fg=\widehat{\mathfrak{gl}}_\ell$ using 
e.g.\ the shuffle algebra techniques developed in
\cite{NeTh}. 

Here we don't use any specific features of 
$\widehat{\mathfrak{gl}}_\ell$ and solve a more general problem,
namely the $q$-difference equations that generalize the eigenvalue
problem solved in \cite{FJMM}.

\subsection{Insertions in quantum K-theory}\label{s_insertions}

\subsubsection{}

In enumerative geometry of regular maps $f:C\to X$, it is natural and
important to be able to constrain the values $f(c)$ of $f$ at
specific points 
$c\in C$. For example, the quantum product in $\Hd(X)$ is defined
using counts of 3-pointed rational curves 
$$
f: (C,c_1,c_2,c_3) \to X
$$
such that the points 
$$
(f(c_1),f(c_2),f(c_3))\in X^3
$$ 
meet 3 given cycles in $X$. 

Unlike regular maps, quasimaps may be
singular at a finite set of points of $C$, whence the difficulties
with using the \emph{rational} map 
\begin{equation}
\ev_c: \QM(X) \owns f \dasharrow f(c) \in X \label{evc}
\end{equation}
in enumerative K-theory of the moduli space $\QM(X)$ of stable
quasimaps to $X$. There are at least 3 ways around this
difficulty, namely: 
\begin{itemize}
\item[---] one can restrict to the open set $\QM(X)_\textup{nonsing
    $c$}$ of quasimaps nonsingular at $c$. While the evaluation map is
  not proper on this subset, the equivariant counts are well defined if
$c\in \{0,\infty\} \subset \bP^1\cong C$ and one works equivariantly with
respect to $\Ct_q= \Aut(\bP^1,0,\infty)$. 
\item[---]
one can use a resolution of the map \eqref{evc} 
\begin{equation}
  \label{eq9}
  \xymatrix{
& \QM_\textup{relative $c$}\ar[dr]^{\tev}\\
\QM_\textup{nonsing $c$} \ar[rr]\ar@{^{(}->}[ru]&& X 
}
\end{equation}
provided by the moduli space of quasimaps \emph{relative} the point
$c\in C$. The domain of a relative quasimap is allowed to sprout off a
chain of rational curves joining the new evaluation point $c$ to its
old location on $C$. 
\item[---] 
tautological bundles $\cV_i$ on $C$ are part of the quasimap data
and one can use Schur functors of their fibers at $c$ to impose
constraints on $f(c)$. These are known as 
\emph{descendent insertions} in the
parlance. 
 Recall that Nakajima varieties are constructed as quotients by
$G= \prod GL(V_i)$ and the natural map (sometimes called the 
K-theoretic analog
of the Kirwan map) 
\begin{equation}
K_{G} (\pt) \to K(X) \label{Kir}
\end{equation}
is known to be surjective \cite{mn}. 
Precisely because of the singularities, the 
bundles $\cV_i$ are \emph{not} pulled back from $X$ by $f$ 
and, therefore, descendent insertions \emph{do not} factor through the
Kirwan map \eqref{Kir}. 
\end{itemize}

\subsubsection{}

Since the options listed above express in different precise languages
 the same intuitive idea
of constraining the value $f(c)$, one expects to have a translation
between e.g.\ relative and descendent insertions at $c$. 

This turns
out to be a highly nontrivial problem with important geometric
applications, for instance, in Donaldson-Thomas theory. Early
discussion of it may be found in \cite{mnop2} and, in cohomology, a very
important progress on this problem was achieved by Pandharipande and Pixton in \cites{PP1,PP2}. Geometric representation theory provides a
different and perhaps more powerful approach to these problems, as
demonstrated by A.~Smirnov in \cite{S2}. 

\subsubsection{}

In a fully
equivariant theory, with the action of $\Ct_q$ included, it is
possible to mix and match the type of insertions at the 
$\Ct_q$-fixed points $\{0,\infty\}$ of the domain $C$. 
It is natural to interpret 2-pointed counts as correspondences acting 
on $K(X)$ or as correspondences between $X$ and 
 the stack $\fX$. 

More precisely, one has to localize $K(X)$ in the
presence of nonsingular insertions and work in formal power series in
the variables
\begin{equation}
z^{\deg f}  \in \textup{semigroup algebra of
$H_2(X,\Z)_\textup{effective}$}\label{zdeg}
\end{equation}
that keep track of the degree of a quasimap. These are usually called
the \emph{K\"ahler variables}, as opposed to the the 
\emph{equivariant variables} which include $q$ and 
the coordinates on a maximal torus 
$$
\bT = \bA \times \Ct_\hbar \subset \Aut(X)\,,
$$
where $\hbar$ is the $\bT$-weight of the symplectic form on
$X$ and $\bA = \Ker \hbar$. 

\subsubsection{}\label{s_list_c}

The geometric, representation-theoretic, and functional nature of
the resulting operators strongly depends on the type of insertions chosen, as
illustrated by the following list. In this list we indicate the type
of insertion at 0 followed by the type of insertion at $\infty$. 
Obviously, the roles of $0$ and $\infty$ may be switched by the
automorphism of $\bP^1$ that permutes them and sends $q$ to $q^{-1}$.

\smallskip 

\noindent
\textbf{relative/relative}, also 
known as the \emph{glue} operator $\bfG$, is a
generalization of the longest element in the quantum
dynamical Weyl group of the nonaffine subalgebra 
$$
\cU_\hbar(\mathfrak{g}) \subset 
\cU_\hbar(\fgh)\,,
$$
see \cite{OS} and also \cite{slc}. It does not depend on
$q$ and is a rational function of the K\"ahler variables $z$. It also does not depend the variables $a$ in $\bA$ in certain special bases
of $K_\bT(X)$, see Section 10.3 in \cite{pcmi}. 

\smallskip 

\noindent
\textbf{relative/nonsingular}, also 
known as the \emph{capping} operator $\bJ$, gives a fundamental solution to
$q$-difference equations in both K\"ahler and equivariant 
variables. Difference equations
with respect to $z$ may be interpreted as the action of the lattice
inside the quantum dynamical affine Weyl group of $\cU_\hbar(\fgh)$. 
Difference equations with respect to $a\in\bA$ are the quantum
Knizhnik-Zamolodchikov equations. 

\smallskip 

\noindent
\textbf{descendent/nonsingular} is
also known as the \emph{vertex with descendents}\footnote{The vertex without
descendents refers to having 
no insertions at $0$.}, or the so-called big-I function in the more
conventional nomenclature that goes back to Givental. Its computation by 
$\Ct_q$-localization may be converted into a Mellin-Barnes type
integral over a certain middle-dimensional 
cycle $\gamma$ in a maximal torus of $G$. Such integrals are a
standard practice in SUSY gauge theory literature, and can be also
explained mathematically,  see e.g.\ the
Appendix in \cite{afo}. Descendent
insertions become functions $\fb_\alpha$ in \eqref{IMB}. 

\smallskip 

\noindent
\textbf{descendent/relative},
also known as the \emph{capped vertex}, is the essential piece in the 
correspondence between descendent and relative insertions. As shown 
in \cite{pcmi}, quantum correction to the capped vertex vanish for any
fixed insertions and sufficiently large framing. This property is 
called \emph{large framing vanishing}. 
Smirnov shows in \cite{S2} how
 to use it to obtain an explicit representation-theoretic 
formula for the capped vertex, which is manifestly a rational function
in all variables. 

\subsubsection{}

The technical crux of the paper is the analysis of the capped 
vertex with our specific insertions $\fb_\alpha$. This is done 
in Section \ref{sp3}. 

Just like the proof of large framing 
vanishing, this is fundamentally a rigidity result in the classical 
spirit of Atiyah, Hirzebruch, Krichever, and others \cites{AtHirz,Kr1,Kr2}. 
 The main 
ingredients in this analysis are the integrality established 
in Section \ref{s_int} and bounds on equivariant weights 
from Section \ref{s_0}.

\subsection{Algebraic Bethe Ansatz reformulation} 
\label{sR1} 

\subsubsection{}
In the study of vertex models of statistical physics, from which 
quantum groups originated, one associates a representation $F$ of 
$\cU_\hbar(\fgh)$ to lines in a 4-valent oriented planar graph and an 
interaction tensor 
$$
R_{F,F'}: F \otimes F' \to F \otimes F' 
$$
to the vertices of the graph, as in Figure \ref{f_r}.  This tensor is the R-matrix for 
$\cU_\hbar(\fgh)$ and the Yang-Baxter equation satisfied by it 
is central to integrability of such models. See e.g.\ 
\cites{ChariPress,JM,KS,Slav} for an introduction. 

\begin{figure}[!hbtp]
  \centering
   \includegraphics[scale=0.33]{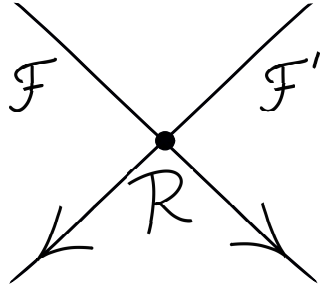}
 \caption{An $R$-matrix interaction in a vertex model}
  \label{f_r}
\end{figure}

In the approach of \cite{MO}, one first constructs geometrically a 
\emph{tensor structure} 
on the K-theory of Nakajima varieties, which then yields R-matrices
and the quantum group itself, see \cites{pcmi,slc} for an
introduction.  

\subsubsection{} 

Tensor structure is realized geometrically using certain 
correspondences called \emph{stable envelopes}, and the 
$R$-matrix is computed as composition of one stable 
envelope with the inverse on another. A certain triangularity
inherent in stable envelopes implies that 
matrix elements of the form 
$$
F \otimes \vac' \xrightarrow{\quad R\big|_{\vac \otimes F'} \quad} \vac \otimes F' 
$$
where $\vac\in F$ and $\vac'\in F'$ are the vacuum, that is, lowest
weight vectors, satisfy
\begin{equation}
R (\alpha \otimes \vac') \Big|_{\vac \otimes F'} = 
\bPi^{-1} \,
\Stab (\alpha \otimes \vac') \Big|_{\vac \otimes F'}\label{RStab}
\end{equation}
for a certain invertible operator $\bPi$ on $F'$. This operator 
$\bPi$ belongs to a very specific commutative algebra
$$
\cB_0 \subset  \End(F') 
$$ 
which may be identified with 
\begin{itemize}
\item[---] the image of the 
quantum loop algebra $\cU_\hbar(\fhh)$ for the Cartan subalgebra 
$\mathfrak{h}\subset \fg$. 
\item[---] the algebra of multiplication operators in the geometric 
realization of $F'$ as a K-theory of a certain algebraic variety. Such
realization makes $F'$ a commutative ring and, in fact, a quotient 
of a ring of $W_G$-invariant Laurent polynomials. It is in this
language that $\bPi$ is presented in \eqref{bPi1} below. 
\item[---] $\cB_0$ is the limit of Baxter's algebra $\cB_z$ of 
commuting transfer matrices 
\eqref{Bax} as the parameter $z$ goes to $0$. 
\end{itemize}
This is reviewed in Section \ref{s_Bax}. 

\subsubsection{}
Our formula for $\bs_\alpha$ is of the form 
\begin{equation}
\bs_\alpha = \Stab(\alpha\otimes \vac') \big|_{\vac \otimes
    \star}\label{s=stab}
\end{equation}
where $\star$ is a specific point \eqref{star}
in the geometric realization of $F'$. Its structure sheaf 
$\cO_\star$ is the unique, up to multiple, eigenvector of 
$\cB_0$ with a certain eigenvalue computed in Section 
\ref{sR2}, where further details may be found. 

This gives 
\begin{equation}
\fb_\alpha = \frac{\bPi} {\Delta_\hbar} \, \cdot \, 
\textup{specific partition function} \label{fbpart}
\end{equation}
where 
\begin{equation}
 \bPi  =\prod_{i\in I} \prod_{k=1}^{\bv_i} \prod_{l=1}^{\bw_i} 
  (1-\hbar x_{i,k}/a_{i,l}) \label{bPi1}
\end{equation}
and the boundary conditions for the partition function in 
\eqref{fbpart} are explained in Figure \ref{f_rrr}.

\begin{figure}[!hbtp]
  \centering
   \includegraphics[scale=0.46]{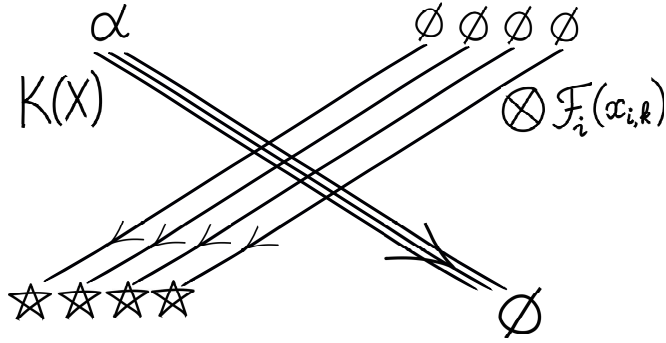}
 \caption{The partition function that computes the function
$\fb_\alpha$. The $\varnothing$-signs denote the vacuum vectors. 
The boundary conditions indicated by stars form an
eigenvector of the algebra $\cB_0$}
  \label{f_rrr}
\end{figure}

 In Figure \ref{f_rrr}, 
we make the fundamental representations $F_i$, $i\in I$, evaluated
at points $x_{i,k}$ where $k=1,\dots, \bv_i$ run along 
the NE-SW lines. Along the NW-SE line runs the 
representation in which $K(X)$ is a weight subspace. 
We draw this line as a multiple line in reference to a tensor 
structure that this module typically possesses. As the boundary 
condition at SW corner, we chose a certain specific eigenvector 
of $\cB_0$. 

The eigenvector property of the boundary conditions means 
the following identity 
\begin{equation}
  \label{fvac}
  \fb_{\alpha\otimes \vac_{ \delta \bw}} = \left. 
\frac{\bPi'}{\bPi\phantom{{}'}}  \right|_{\bw = \delta \bw} \, 
\fb_{\alpha}
\end{equation}
where $\vac_{ \delta \bw}$ is the vacuum vector of weight 
$\delta \bw$ and 
$$
\bPi'  = \prod_{i\in I} \prod_{k=1}^{\bv_i} \prod_{l=1}^{\bw_i} 
  \hbar^{1/2} (1-x_{i,k}/a_{i,l}) \,. 
$$
Pictorially, the eigenvalue property \eqref{fvac} may be 
represented as follows: 
\begin{equation*}
 \raisebox{-25pt}{\includegraphics[scale=0.33]{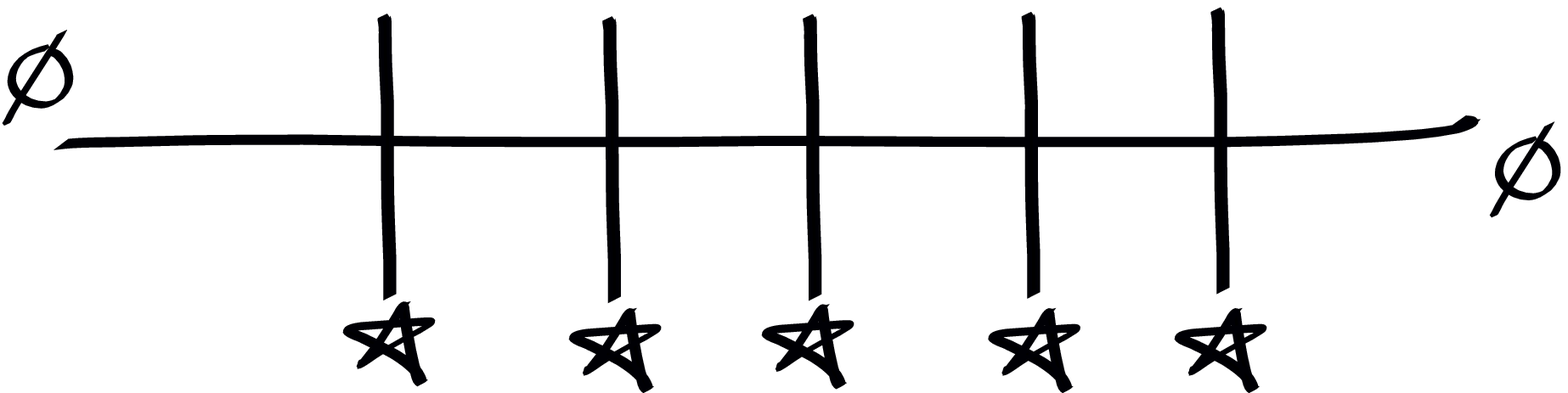}} \quad = 
\left. \frac{\bPi'}{\bPi\phantom{{}'}}  \right|_{\bw = \delta \bw} 
 \raisebox{-25pt}{ \includegraphics[scale=0.33]{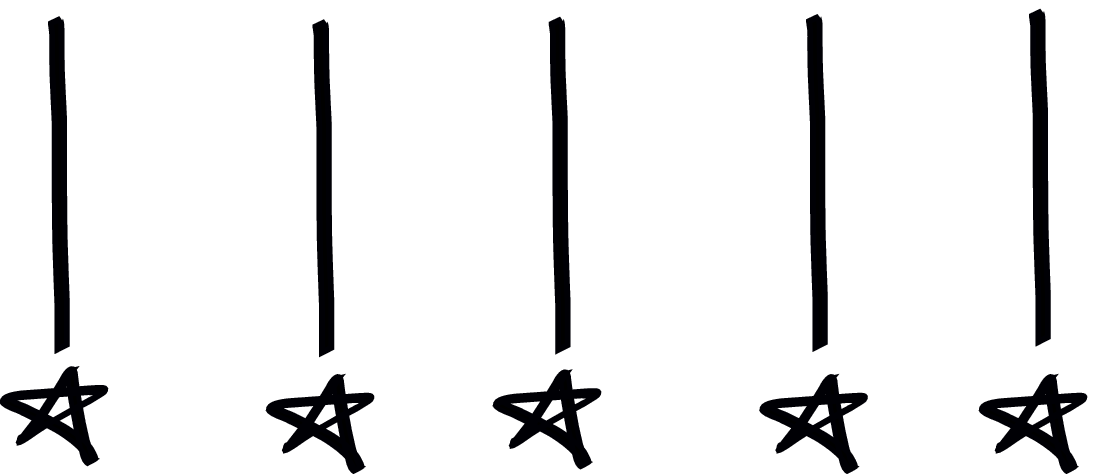}} \label{eq:2}
  \end{equation*}

Explicit formulas for the eigenvector $\cO_\star$ may, in turn, 
be given in terms of stable envelopes. This is a reflection of the 
basic fact that the dual of the stable envelope is again 
a stable envelope with opposite parameters, see \cite{pcmi}.

\subsubsection{}
Let 
$$
\bp(x_{i,k}) \in \Z[T_G/W_G] 
$$
by a symmetric polynomial of $x_{i,k}$, that is, a characteristic 
class $\bp(\{V_i\})$ of the tautological bundles $V_i$ on $\fR$. Formula 
\eqref{fbres} means 
\begin{equation}
  \label{innK}
  \big( \alpha,\bp \big)_{K_\bT(X)} = 
\chi(\alpha \otimes \bp(\{V_i\})) = 
\int_{\gamma_0} \fb_\alpha \, \bp(x_{i,k}) \, \dots \,,
\end{equation}
where $\gamma_0$ is the part of $\gamma$ that encircles 
the image of $\iota$ in \eqref{SpecK} and the integration measure 
omitted in \eqref{innK} 
is, among other things, the specialization of the integration measure
in \eqref{IMB} to quasimaps of degree $0$. 

Using \eqref{innK}, we can read the operators in Figure \ref{f_rrr} 
backwards and interprete that picture as an operator formula
for the off-shell Bethe eigenfunction. In 
the familiar context of the spin 1/2 XXZ spin chain, this 
becomes the classic formula 
$$
\begin{matrix}
    \textup{off-shell Bethe}\\
\textup{eigenfunction} 
  \end{matrix}  = 
B(x_1) \dots B(x_{\bv}) \, \vac
$$
of the algebraic Bethe Ansatz, further generalized in \cite{KR} 
and countless papers since.

\subsection{Acknowledgments}

\subsubsection{} 

Our interactions with Pavel Etingof, Boris Feigin, Edward Frenkel, 
Davesh Maulik, Nikita Nekrasov, 
Nikolai Reshetikhin, and Andrei Smirnov played a very
important role in the development of the ideas presented here.

\subsubsection{}
As already explained, the connection between 
Bethe equations and supersymmetric gauge theories (specifically, enumerative theory of
quasimaps) goes back to the pioneering work of Nekrasov and
Shatashvili \cite{NS1,NS2}. As a next step, 
Bethe eigenvectors found a gauge
theoretic interpretation in Nekrasov's study of 
orbifold defects in gauge theories, see
\cite{NekVid1,NekVid2,NekPrep}.

\subsubsection{} 

This paper looks from a somewhat different angle on 
the problem which Smirnov essentially already solved in \cite{S2} 
building on the large framing vanishing of \cite{pcmi}. Smirnov's 
result is used in \cite{afo} to solve qKZ by Mellin-Barnes integrals 
and the present work was very much motivated by the desire to 
bring the formulas of \cite{afo} 
closer to those of Tarasov and Varchenko. 
In this, we were guided by the papers \cites{RTV1,RTV2} of 
Rim{\'a}nyi, Tarasov, and Varchenko and also by the older papers
of Matsuo \cite{Matsuo} and Reshetikhin \cite{Resh}.

\subsubsection{}

In this paper, we present complete integral solutions to the 
dynamical and qKZ equations for tensor products of evaluation 
representations of quantum affine algebras associated to quivers. 
As a special case, this includes diagonalization of Baxter-Bethe 
commuting operators acting in these spaces. That problem 
goes back to a 1931 paper of Hans Bethe and is the subject of 
an immense body of literature both in mathematics and physics. 

It is unrealistic to analyze how the great many different threads 
present in that literature enter implicitly or explicitly in what we 
do here. We cannot attempt to survey the literature and only 
include those references that influenced our work. Of the many 
different approaches to Bethe Ansatz, we suspect the one based
on the so-called universal weight function \cites{EKP,FKPR,KP,KPT} 
may be the closest. Stable envelopes which we use here give 
a geometric Gauss factorization of the $R$-matrices in the 
style of Khoroshkin and Tolstoy and this is closely related to 
universal weights functions. 

\subsubsection{} 
Another paper which is particularly close to direction of this 
work is \cite{FJMM}, where the authors prove a formula for Bethe 
eigenvectors for $\cU_\hbar(\widehat{\mathfrak{gl}}_1)$ 
which is a close relative of our formula \eqref{fbpart}, 
see Section 4 in \cite{FJMM}. 

Instead of taking the eigenvector boundary condition in 
Figure \ref{f_rrr}, the authors of  \cite{FJMM} take the 
$(\varnothing,\varnothing)$-matrix element of the $R$-matrix 
as a universal map $K_\bT(X) \to \cU_\hbar(\fgh)$, which they 
further compose with a shuffle algebra realization of $\cU_\hbar(\fgh)$ 
to get to functions of $x_{i,k}$. Our formulation bypasses the need to
work with shuffles, and also solves a more general problem --- the 
$q$-difference equations. For eigenvalue problems, overall factors,
such as our denominators $\Delta_\hbar$, are not relevant, which 
explains the discrepancy with \cite{FJMM}, where the square 
$\Delta_\hbar\big|_{\hbar=1}$ of the 
Vandermonde determinant appears in the denominators.

\subsubsection{}
M.A.\ is supported by NSF grant 1521446 and by the Berkeley Center for
Theoretical Physics. Both authors are supported by the 
Simons
Foundation as a Simons Investigators. A.O.\ gratefully acknowledges
funding by the Russian Academic Excellence Project '5-100' and 
RSF grant 16-11-10160.

\section{Main result} 

\subsection{Descendent insertions from stable envelopes} 

\subsubsection{}

For a given oriented framed quiver like the one in Figure \ref{f_quiver}, 
let $\Rep(\bv,\bw)$ 
denote the linear space of quiver representation with dimension 
vectors $\bv$ and $\bw$, where 
$$
\bv_i = \dim V_i \,, \quad \bw_i = \dim W_i \,. 
$$
Let 
$$
\mu: T^*\Rep(\bv,\bw) \to \Lie(G)^*\,, \quad G = \prod GL(V_i) \,, 
$$
be the algebraic moment map and let 
$$
\cZ(\bv,\bw) = \mu^{-1}(0) 
$$
denote its zero locus. 

\begin{figure}[!hbtp]
  \centering
   \includegraphics[scale=0.4]{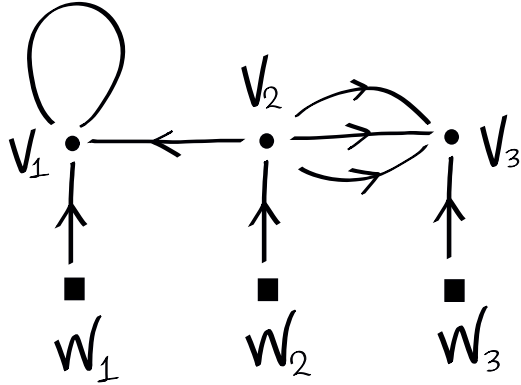}
 \caption{An example of an oriented framed quiver.}
  \label{f_quiver}
\end{figure}

By definition, a Nakajima 
variety $X$ is an algebraic symplectic 
reduction 
$$
X = \cM(\bv,\bw) =  \cZ(\bv,\bw) \rdd G 
$$
where a certain choice of a GIT stability condition is
understood, see e.g.\ \cite{GinzNak} for an 
introduction. The stability choices are parametrized by vectors 
\begin{equation}
\theta \in \R^I = \textup{characters}(G) \otimes_\Z \R \,,
\label{theta}
\end{equation}
which must avoid a finite number of rational hyperplanes, 
up to a positive proportionality. We also consider 
quotient stacks
\begin{equation}
X\subset \fX = 
\left[
\frac{\cZ(\bv,\bw)}{G}
\right] \subset 
\fR = 
\left[
\frac{T^*\Rep(\bv,\bw)}{G}
\right] 
\label{XR}
\end{equation}
obtained by forgetting the stability condition and the 
moment map equations, respectively. 

\subsubsection{}

Our goal in this section is to construct a 
certain $K_\bT(\pt)$-linear map 
\begin{equation}
  \label{bs}
K_\bT(X) \owns \alpha \mapsto \bs_\alpha \in K_\bT(\fR) 
\end{equation}
such that $\bs_\alpha$ is supported 
on $\fX \subset \fR$ and 
\begin{equation}
\bs_\alpha\Big|_{\textup{neighborhood of $X$}}  = \iota_{X,*} \, \alpha  \,,
\label{bsres}  
\end{equation}
where $\iota_{X}: X \hookrightarrow \fR$ is the inclusion. 
One can thus view \eqref{bs} as 
an extension of $\iota_{X,*} \, \alpha$ 
to a K-theory class on $\fR$. 

This extension is canonical once certain further choices are made. 
Its construction involves stable envelopes on a 
larger Nakajima variety $\cM(\bv,\bw+\bv)$.

\subsubsection{}\label{s_orient} 

The dependence of what follows on the stability condition
\eqref{theta} may be summarized as follows. Let $i\in I$ be a 
vertex of the quiver and let $\delta_i\in \N^I$ be the 
delta-function at $i$. Consider 
$$
T^* \Rep(\delta_i,\delta_i) = T^* \Hom(W_i,V_i) \oplus 
T^* \Hom(V_i,V_i)^{\mathbf{g}}
$$
where $\dim W_i = \dim V_i=1$ and 
$$
\mathbf{g} = \textup{number of loops at i}  \,.
$$
The moment map 
equations take the form 
$$
ab = 0\,, \quad  a\in \Hom (W_i, V_i) \,, 
 b\in \Hom (V_i, W_i) \,, 
$$
and 
\begin{equation}
  \label{abstab}
  \cZ(\delta_i,\delta_i)_\textup{stable} = 
  \begin{cases}
    a\ne 0 \,, \quad  \\
b\ne 0 \,, \quad & \textup{depending on $\theta_i \gtrless 0$} \,.
  \end{cases}
\end{equation}
For either choice of stability, this gives 
\begin{equation}
\cM(\delta_i,\delta_i) \cong \C^{2 \mathbf{g}}\label{Mdd}
\end{equation}
equivariantly with respect to 
$
Sp(2 \mathbf{g}) \subset \Aut \cM(\bv,\bw) 
$\,.

Since Nakajima varieties are unchanged under flips of edge 
orientation, we may assume that the direction of the 
invertible map in \eqref{abstab} coincides with the orientation. 
To simplify the exposition we will 
\emph{assume} that the framing edges are oriented 
in the direction of $\Hom(W_i,V_i)$ in Figure
\ref{f_quiver}. It will be clear how to modify 
this in the general case.

\subsubsection{}

By convention, the group $\Ct_\hbar$ scales the cotangent 
direction of $T^*\Rep(\bv,\bw)$ by $\hbar^{-1}$ and thus scales the 
canonical symplectic form $\omega$ 
on $X$ with weight $\hbar$. Note that 
this splitting of the exact sequence 
$$
1 \to \Aut(X,\omega) \to \Aut(X) \to GL(\C\omega) \to 1 
$$
depends on the choice of the orientation. 

\subsubsection{}

Let $V'_i$ be collection of vector spaces of $\dim V'_i =
\bv_i$ and denote
$$
G' = \prod GL(V'_i) \cong G \,. 
$$
Define 
\begin{equation}
Y = \cZ(\bv,\bw+\bv)_\textup{iso} / G'\label{defY}
\end{equation}
where the framing spaces are of the form $W_i\oplus V'_i$ and 
the subscript refers to the locus of points where the framing 
maps 
$$
V_i' \to V_i\,, \quad  \textup{respectively} \quad V_i \to V_i' \,, 
$$ 
are isomorphism, according to the orientation explained in 
Section \ref{s_orient}. In what follows, we will assume that 
$V'_i \xrightarrow{\sim} V_i$.  Clearly,  
$$
 \cZ(\bv,\bw+\bv)_\textup{iso} \subset  
\cZ(\bv,\bw+\bv)_\textup{$G$-stable} \,. 
$$

\subsubsection{}

There is a $G$-equivariant map 
$$
\bdiota :  T^*\Rep(\bv,\bw) \hookrightarrow Y 
$$
which supplements quiver maps by 
\begin{equation}
(\phi, - \phi^{-1} \circ \mu_{GL(V_i)})  \in \Hom(V_i',V_i) \oplus
\Hom(V_i,V_i') \label{extmu}
\end{equation}
for a framing isomorphism  
$$
V_i'\xrightarrow{\,\, \phi\,\,} V_i \,.
$$
The dependence on $\phi$ 
is precisely taken out by the quotient by $G'$. 

We denote the induced map 
\begin{equation}
  \label{bdiota}
  \bdiota :  \fR \hookrightarrow 
\left[Y/G \right]= 
\left[\cM(\bv,\bv+\bw)_\textup{iso}/G' \right] 
\end{equation}
by the same symbol. Formula \eqref{extmu} implies
$$
\mu_{G'} = - \phi^{-1} \circ \mu_G \circ \phi 
$$
and thus $\fX \subset \fR$ 
is cut out but pullback via $\bdiota$ of 
the moment map equations for $G'$. In other words, 
we have a pull-back diagram 
\begin{equation}
  \label{pullbackdiag}
  \xymatrix{
\fX  \ar[r]\ar[d]& \fR  \ar[d]^{\mu_{G'} \circ \bdiota} \\
\left[0/G'\right] \ar[r] &  \left[\Lie(G')^*/G'\right]\,\, .}
\end{equation}

\subsubsection{}\label{s_ZG'} 
Let 
$$
\bU \cong \Ct \subset \cent(G') 
$$
be the group acting with its defining weight $u$ on 
each $V'_i$. We have 
\begin{equation}
  \label{Xfix}
 X \sqcup \cM(\bv,\bv) \subset \cM(\bv, \bw + \bv)^\bU
\end{equation}
and we can choose attracting directions for $\bU$ so that 
$\cM(\bv,\bv)$ lies in the full attracting set of $X$. 

We apply the general machinery of stable envelopes, 
an introduction to 
which may be found in \cite{pcmi},
to this action of  $\bU$. Since $\bU$ commutes with 
$\bT\times G'$, stable envelopes give 
 a $K_{\bT\times G'}(X)$-linear map 
\begin{equation}
  \label{Stab1}
  \Stab: K_{\bT\times G'}(X) \to K_{\bT\times G'}(\cM(\bv, \bw + \bv)) \end{equation}
that depends on two pieces of additional data, 
namely: 
\begin{itemize}
\item[---] a fractional line bundle $\cL \in \Pic(X) \otimes \R$, called
  the \emph{slope}. The slope $\cL$ should be away from the walls of a
  certain periodic locally finite rational 
hyperplane arrangement in $\Pic(X) \otimes \R$ and stable envelopes
depend only on the alcove of that arrangement that contains $\cL$. 
We fix the slope to be 
\begin{equation}
  \label{slope_choice}
  \cL = \varepsilon \cdot \textup{ample bundle} \,, \quad 
0 < \varepsilon \ll 1 \,. 
\end{equation}
This choice is not material for showing \eqref{bsres}, but will 
be crucial for what comes later.  
\item[---] a polarization $T^{1/2}$ which is a solution of the
  equation 
$$
T^{1/2} + \hbar^{-1} \, \left(T^{1/2} \right)^\vee = 
\textup{tangent bundle} 
$$
in equivariant K-theory. Polarization is an 
auxiliary piece of data in that 
stable envelopes
corresponding to different polarizations differ by a shift of the
slope. Polarization is also required to set up quasimap counts, 
see Section 6.1 in \cite{pcmi}, and so we assume that a 
polarization of X has been chosen and set 
\begin{equation}
  \label{choice_pol}
  T^{1/2}_{\cM(\bv, \bw + \bv)} = T^{1/2} X + \sum_i \hbar^{-1}
  \, \Hom(V_i,V'_i) \,,
\end{equation}
that is, we select the directions \emph{opposite}
 to the framing maps 
$V_i'\to V$ that are assumed to be invertible. 
\end{itemize}

\subsubsection{}\label{s_d1}  
Because $\Stab(\alpha)$ is $G'$-equivariant, it descends to 
a class on $Y/G$. We make the following 

\begin{Definition}\label{d1} 
We set
\begin{equation}
  \label{saconstr}
  \bs_\alpha = \bdiota^* \, \Stab(\alpha)  \in K_\bT(\fR) 
\end{equation}
where the slope of the stable envelope is chosen as
 in \eqref{slope_choice}
and the polarization is as in \eqref{choice_pol}. 
\end{Definition}

\begin{Proposition}\label{p1} 
 The class \eqref{saconstr} is supported on $\fX \subset \fR$ and 
satisfies \eqref{bsres}. 
\end{Proposition}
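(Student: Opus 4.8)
The plan is to deduce both assertions from the pull-back diagram \eqref{pullbackdiag} together with the support properties of stable envelopes. First I would recall that, by the defining triangularity of the stable envelope $\Stab(\alpha)$, its support is contained in the full attracting set of $X$ inside $\cM(\bv,\bw+\bv)$ with respect to the $\bU$-action fixed in Section \ref{s_ZG'}; by \eqref{Xfix} and the choice of attracting directions made there, this full attracting set is $\Attr^f(X) \subset \Attr^f(X) \sqcup (\text{chambers through } \cM(\bv,\bv))$, and in particular $\cM(\bv,\bv)$ lies in the closure of the attracting set of $X$. The key geometric point is then that $\bdiota$ in \eqref{bdiota} lands in $\cM(\bv,\bv+\bw)_{\textup{iso}}/G'$, and the composition $\mu_{G'}\circ\bdiota$ on $\fR$ equals (via \eqref{extmu}) the $G'$-conjugate of $\mu_G$, so $\bdiota^{-1}$ of the zero locus of $\mu_{G'}$ is exactly $\fX$. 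Therefore $\bdiota^{-1}$ of the $\bU$-attracting set of $X$ inside $Y/G$ is supported on $\fX$: away from $\fX$, the moment map $\mu_{G'}$ is nonzero, which moves the point out of $\cM(\bv,\bw+\bv)$ entirely (we are pulling back a class on the Nakajima \emph{variety}, i.e.\ the locus $\mu_{G'}=0$ inside the iso-locus). Pulling back $\Stab(\alpha)$ — a class supported on the attracting set of $X$ — along $\bdiota$ thus yields a class supported on $\fX$, giving the first assertion.

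For the restriction statement \eqref{bsres}, I would argue as follows. The normalization axiom for stable envelopes states that $\Stab(\alpha)$ restricted to (a neighborhood of) the fixed component $X$ equals $\iota_{X,*}\alpha$ up to the chosen polarization twist; with the polarization \eqref{choice_pol} and slope \eqref{slope_choice}, the normalization is the identity on the $X$-component (the extra summand $\sum_i \hbar^{-1}\Hom(V_i,V'_i)$ in \eqref{choice_pol} is precisely the piece of the normal bundle to $X$ that is ``positive'' for $\bU$ and hence contributes trivially to the leading term on $X$). Next, $\bdiota$ restricted to the locus in $\fR$ lying over a neighborhood of $X\subset\fX$ is, by construction \eqref{extmu}, a section of the $G'$-bundle over that neighborhood identifying $\fX$-near-$X$ with the corresponding neighborhood in $\cM(\bv,\bw+\bv)$; concretely $\bdiota$ sends a stable point of $T^*\Rep(\bv,\bw)$ representing a point of $X$ to the corresponding point of $X\subset\cM(\bv,\bw+\bv)^\bU$. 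Hence $\bdiota^*\iota_{X,*}\alpha = \iota_{X,*}\alpha$ as classes on the neighborhood of $X$ in $\fR$, which is exactly \eqref{bsres}. Here one has to be slightly careful that the neighborhood of $X$ in $\fR$ on which we restrict is mapped by $\bdiota$ into the locus where $\Stab(\alpha)$ is already in its ``normalized'' form, but since $X$ is an isolated (open and closed) piece of the fixed locus by \eqref{Xfix}, such a neighborhood exists.

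The main obstacle I anticipate is the careful bookkeeping in the second paragraph: matching the polarization twist in the stable-envelope normalization with the extra factor $\sum_i \hbar^{-1}\Hom(V_i,V'_i)$ in \eqref{choice_pol}, and checking that on the $X$-component this twist is trivial so that the leading term is genuinely $\iota_{X,*}\alpha$ rather than $\iota_{X,*}\alpha$ multiplied by some $\Lambda^\bullet$-class of a normal direction. One must verify that the directions selected in \eqref{choice_pol} are all \emph{repelling} (or all attracting, consistently) for $\bU$ on the $X$-component, so that they either all appear or all are absent in the normalization, and then confirm the chosen convention makes them contribute $1$. The support claim, by contrast, is essentially formal once one has the pull-back square \eqref{pullbackdiag} and the standard support axiom for stable envelopes, so I would present it first and spend the bulk of the argument on the normalization.
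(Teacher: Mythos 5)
Your outline (get the support statement from \eqref{pullbackdiag} plus the support axiom for stable envelopes, then get \eqref{bsres} from the normalization with the polarization \eqref{choice_pol}) parallels the paper's proof, but both halves rest on a misreading of the geometry of $\bdiota$, and the two steps that actually carry the proof are missing. For the support claim, you assert that away from $\fX$ the nonvanishing of $\mu_{G'}$ ``moves the point out of $\cM(\bv,\bw+\bv)$ entirely,'' identifying the Nakajima variety with the locus $\mu_{G'}=0$ in the iso-locus. This is false: $\cM(\bv,\bw+\bv)$ is cut out by the moment map of the gauge group $G$, not of the framing symmetry $G'$, and the extension \eqref{extmu} is engineered precisely so that the $G$-moment map of the extended data vanishes identically. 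Hence $\bdiota$ maps \emph{all} of $\fR$ into $\left[\cM(\bv,\bw+\bv)_{\textup{iso}}/G'\right]$, with $\mu_{G'}\circ\bdiota=-\phi^{-1}\mu_G\,\phi$ generally nonzero; if your description were correct, $\bs_\alpha$ would carry no information off $\fX$ and the whole construction would be vacuous. The missing ingredient is the paper's observation that $\mu_{G'}$ is a $\bU$-invariant map to an affine space and vanishes on $X$, hence vanishes on the entire full attracting set of $X$, hence on $\supp \Stab(\alpha)$; only then does \eqref{pullbackdiag} give $\supp\bs_\alpha\subset\fX$.

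For \eqref{bsres}, the normalization you invoke is also off: near $X$ the stable envelope is not $\iota_{X,*}\,\alpha$ but $\iota_{\Attr,*}\,\pi_{\Attr}^*\,\alpha$, a class supported on the attracting manifold $\Attr(X)$ (new framing maps $V'_i\to V_i$ arbitrary, maps $V_i\to V'_i$ zero); the role of \eqref{choice_pol} is to make this hold with no extra line-bundle twist. Moreover \eqref{bsres} is a statement on a neighborhood of $X$ in $\fR$, which contains points with $\mu_G\ne 0$; at such points $\bdiota$ lands neither in $X$ nor in $\Attr(X)$, since its repelling coordinate is $-\phi^{-1}\mu_G$. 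So the claimed identification of that neighborhood with a neighborhood in $\cM(\bv,\bw+\bv)$, and the final assertion $\bdiota^*\iota_{X,*}\alpha=\iota_{X,*}\alpha$, at best concern the restriction to $\fX$ and do not prove \eqref{bsres}. The actual computation is that pulling $\iota_{\Attr,*}\,\pi_{\Attr}^*\,\alpha$ back along $\bdiota$ imposes exactly the vanishing of the $V_i\to V'_i$ components, i.e.\ the moment map equations, which near the stable locus cut out $X$ regularly, so the pullback equals $\iota_{X,*}\,\alpha$ on the neighborhood. Your worry about the polarization twist is legitimate, but the crux you would need to supply is this transverse-intersection step, not the twist bookkeeping alone.
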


\begin{proof}
  The moment map $\mu_{G'}$ for the group $G'$ is an 
$\bU$-invariant\footnote{that is, an equivariant 
map to a variety with a trivial $\bU$-action} map to an affine variety. Since 
$X$ in \eqref{Xfix} lies in the zero fiber of this map, the 
full attracting set of $X$ does, too. From \eqref{pullbackdiag}, 
we conclude that  
$$
\supp \bs_\alpha \subset \fX\,.
$$
Now let $\Attr(X)$ denote the attracting manifold of $X$ in 
\eqref{Xfix}. It fits into the diagram
\begin{equation}
  \label{Attr}
  \xymatrix{
& \Attr(X) \ar[dl]_{\pi_{\Attr}} \ar[dr]^{\iota_{\Attr}} \\ 
X && \cM(\bv, \bw + \bv)
}
\end{equation}
in which 
\begin{itemize}
\item[---] the map $\pi_{\Attr}$ forgets the maps $V'_i \to V_i$,
\item[---] the map $\iota_{\Attr}$ sets to zero the maps $V_i \to
  V'_i$. 
\end{itemize}

Our choice of the polarization \eqref{choice_pol} and 
the conventions for the normalization of the stable envelope 
explained in Section 9.1 of \cite{pcmi} imply 
$$
\Stab(\alpha) \Big|_\textup{neighborhood of $X$}= \iota_{\Attr,*} \, 
\pi_{\Attr}^* \, \alpha  \,, 
$$
whence the conclusion. 
\end{proof}

\subsection{Restriction to the origin}\label{s_0} 

\subsubsection{} 

As a polynomial in universal bundles, 
the insertion $\bs_\alpha$ is
determined by its restriction to the origin $0\in \fR$. 

Our next goal is to bound the $G$-weights that appear 
in this restriction.  The origin is a fixed point of $G$ and under 
the inclusion $\bdiota$ it corresponds to the point 
\begin{equation}
\star=\{V'_i \xrightarrow{\quad \sim \quad} V_i\,, 
\,\, \textup{all other maps}=0 \} \label{star} 
\end{equation}
which can be viewed as a point in either $Y^G$ or 
$\cM(\bv,\bw+\bv)^{G'}$, the isomorphism in \eqref{star} 
giving an identification of $G$ and $G'$.

To bound the $G$-weight in $\bs_\alpha\big|_0$ is thus 
same as to bound the $G'$ weights of $\Stab(\alpha)\big|_\star$. 
This is equivalent to bounding the $\bA'$-weights, where 
$\bA'\subset G'$ is a maximal torus. 

\subsubsection{}

The torus $\bA'$ contains $\bU$. Since
$X\subset \cM(\bv,\bw+\bv)$ is fixed by the whole torus 
$\bA'$, the triangle lemma for stable envelopes implies 
$$
\Stab_{\bA'}(\alpha) = \Stab_{\bU}(\alpha) 
$$
for the same slope, polarization, and a small perturbation 
of the 1-parameter subgroup. See Section 9.2 in \cite{pcmi} for a 
discussion of the triangle lemma. 

\subsubsection{}

By definition of stable envelopes, the torus weights in their 
restriction to fixed points are bounded in terms of the 
polarization, after a shift by the slope 
\begin{equation}
\cL \in \Pic(X) \otimes_\Z \R = \textup{characters}(G) 
\otimes_\Z \R \label{picdet} \,.
\end{equation}
The identification in \eqref{picdet} sees $\det V_i$ as a 
line bundle on $X$ and as a character of $G$. 

While we made a specific choice of $\cL$ in \eqref{slope_choice}, 
the following proposition is true for an arbitrary slope. 

\begin{Proposition}\label{p_weights} 
The $G$-weights of the restriction of $\bs_\alpha$ to the origin
$0\in\fR$ are contained in 
\begin{equation}
\cL + 
\textup{convex hull} \left( \textup{weights of } \Ld 
\left(T^{1/2}_{\cM(\bv,\bw+\bv)}\right)^{\!\vee}_\star \right) 
\subset \textup{weights} \otimes_\Z \R \,.\label{supps}
\end{equation}
\end{Proposition}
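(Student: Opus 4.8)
The plan is to reduce the statement, via the chain of identifications already set up, to the defining support/weight properties of K-theoretic stable envelopes. First I would invoke Definition~\ref{d1}: since $\bs_\alpha = \bdiota^*\Stab(\alpha)$ and $\bdiota$ sends the origin $0\in\fR$ to the point $\star$ of \eqref{star}, and since a polynomial in the universal bundles is determined by its restriction to $0$ (as noted at the start of Section~\ref{s_0}), it suffices to bound the $G'$-weights appearing in $\Stab(\alpha)\big|_\star$, where $G'$ is identified with $G$ via the isomorphism $V'_i\xrightarrow{\sim}V_i$ built into $\star$. Next, because $\star$ lies in $X\subset\cM(\bv,\bw+\bv)$ and $X$ is fixed by the whole maximal torus $\bA'\subset G'$, the triangle lemma for stable envelopes (Section 9.2 of \cite{pcmi}) lets me replace $\Stab_{\bU}$ by $\Stab_{\bA'}$ for the same slope and polarization and a generic perturbation of the cocharacter; so I only need to control $\bA'$-weights, equivalently $G'$-weights.

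With that reduction in place, the core input is the defining normalization property of stable envelopes: for a fixed component $F$, the restriction $\Stab(\gamma)\big|_F$ has Newton polytope (in the cocharacter torus) contained in
\begin{equation*}
\deg_{\cL} + \textup{convex hull}\left(\textup{weights of }\Ld\big(N^{-}_{F}\big)^{\!\vee}\right)
\end{equation*}
up to the $\cL$-dependent shift recorded in \eqref{picdet}, where $N^{-}_F$ is the repelling part of the normal bundle for the chosen polarization — see Section 9.1 of \cite{pcmi} for the precise statement and sign conventions. Applying this at the component $F\ni\star$ and using that the relevant normal directions are exactly those packaged into the polarization $T^{1/2}_{\cM(\bv,\bw+\bv)}$ of \eqref{choice_pol}, the weight bound becomes precisely \eqref{supps}: the shift by $\cL$ under the identification \eqref{picdet} of $\Pic(X)\otimes\R$ with $\textup{characters}(G)\otimes\R$ via $\det V_i$, plus the convex hull of the weights of $\Ld\big(T^{1/2}_{\cM(\bv,\bw+\bv)}\big)^{\!\vee}_\star$. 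One must be slightly careful that passing from $\Stab_{\bU}$ to $\Stab_{\bA'}$ does not change the polytope — this is exactly the content of the triangle lemma, which guarantees the two stable envelopes agree (not merely have comparable supports).

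The main obstacle I anticipate is bookkeeping rather than a conceptual gap: one has to match the sign/duality conventions in the normalization of the stable envelope (the $(\;)^\vee$ and the factor of $\hbar$ in $T^{1/2}+\hbar^{-1}(T^{1/2})^\vee = TX$) with the convention fixed in Section~9.1 of \cite{pcmi}, and to check that restricting along $\bdiota$ does not introduce extra weights — but this is immediate since $\bdiota^*$ is a ring map and $\star = \bdiota(0)$, so $\bs_\alpha\big|_0 = \Stab(\alpha)\big|_\star$ as elements of $K_{\bT\times G}(\pt)$ verbatim. A secondary point to verify is that the fixed component of $\bA'$ through $\star$ is indeed $X$ and not some larger piece of $\cM(\bv,\bw+\bv)^{\bA'}$, so that the normal bundle used in the weight estimate is the one whose half is $T^{1/2}_{\cM(\bv,\bw+\bv)}$; this follows from \eqref{Xfix} together with the fact that $\bA'\supset\bU$ and the $\bU$-fixed locus decomposes as in Section~\ref{s_ZG'}. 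Assembling these observations in the order above yields \eqref{supps}, and since the weight bound for stable envelopes holds for any slope, so does the proposition.
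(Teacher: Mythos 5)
Your proposal follows the same route as the paper: the paper's own proof is the one-line ``follows directly from the definition of stable envelopes,'' with the reductions you spell out (restriction along $\bdiota$ identifying $\bs_\alpha\big|_0$ with $\Stab(\alpha)\big|_\star$, passage from $\bU$- to $\bA'$-weights via the triangle lemma, and the slope-shifted weight window from the normalization of stable envelopes) carried out in the surrounding text of Section \ref{s_0}. The final bound you obtain is exactly \eqref{supps}, and the remark that the statement holds for arbitrary slope matches the paper.

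One geometric slip should be corrected, though it does not change the conclusion. The point $\star$ of \eqref{star} does \emph{not} lie in $X$: at $\star$ the maps $V'_i\to V_i$ are isomorphisms and the $W$-framing maps vanish, so $\star$ is the $G'$-fixed point of the \emph{other} component $\cM(\bv,\bv)$ of the $\bU$-fixed locus \eqref{Xfix}; its $\bA'$-fixed component is $\prod_i\cM(\delta_i,\delta_i)^{\bv_i}$, a vector space with origin $\star$, not $X$. The triangle lemma is invoked in the paper because the \emph{source} $X$ of the stable envelope (where $\alpha$ lives) is fixed by all of $\bA'$, not because $\star\in X$; and the weight window is then read off at the fixed point $\star$ of the non-source component, with the polarization \eqref{choice_pol} restricted to $V'=V$ --- which is precisely why the exterior algebra in \eqref{supps} carries the subscript $\star$. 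Your displayed bound is the right one, but your ``secondary point'' that the fixed component through $\star$ is $X$ is false and should not be used as justification.
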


\noindent 
Here vee and star denote the dual representation and 
the restriction to \eqref{star}, that is, to 
$V'=V$, respectively.  

\begin{proof}
Follows directly from the definition of stable envelopes. 
\end{proof}


\subsubsection{}\label{s_actual} 

In general, a polarization of a Nakajima variety 
is a virtual bundle on the prequotient in which either the 
tangent bundle to $G$-orbits or the target of the 
moment map equations enters with the minus sign. 
Note, however, that this term is precisely added back 
in \eqref{choice_pol} after the specialization to $V'=V$.
 This means that
$T^{1/2}_{\cM(\bv,\bw+\bv)}\big|_\star$
is an actual representation of $G$ modulo balanced 
classes, and thus the  exterior algebra in \eqref{supps} is an actual 
$G$-module. 

Recall from \cite{pcmi} that a virtual representation of 
$G$ is called \emph{balanced} if it is of the form $V - V^\vee$, for
some $V\in K_G(\pt)$.  For balanced classes, one defines 
$$
\Ld \left(V - V^\vee\right) = (-1)^{\dim V} \det V \,. 
$$

\subsubsection{}
The bound in Proposition \eqref{p_weights} means that 
$\bs_\alpha$ may be seen as a stable envelope extension 
of the class $\iota_{X,*} \, \alpha$ to the stack $\fR$ in the 
sense of D.~Halpern-Leistner and his collaborators, see \cites{HL,HLMO,HLS}.

\subsection{Integrality of $\fb_\alpha$-insertions}\label{s_int}

\subsubsection{}
Let $\QM(\fR)$ denote the moduli space of stable quasimaps 
$$
f: C \dasharrow \fR \,, 
$$
as defined in  \cite{CKM}, see also e.g.\ \cite{pcmi} for an 
informal introduction. 

By definition, a point of $\QM(\fR)$ is a collection of vector 
bundles $\cV_i$ on $C$ of rank $\bv$, together with a section of 
the associated bundles like $\Hom(\cV_i,\cV_j)$ or 
$\Hom(\cV_i,\cW_j)$ per every 
arrow in the doubled quiver, where $\cW_j$ is a trivial 
bundle of rank $\bw_j$. A quasimap is stable if it evaluates to 
a stable point of $\fR$ at the generic point of $C$.  We set
$$
\deg f = \big( \dots, \deg \cV_i, \dots\big) \in \Z^I  
$$
by definition. 

The image of the natural inclusion 
$$
\iota_{\QM(X)}: \QM(X) \hookrightarrow \QM(\fR)
$$
 is cut out by the 
moment map equations imposed pointwise. 

\subsubsection{}

Consider the pull-back 
$$
\ev_0^* \, \bs_\alpha \in K_{\bT\times \Ct_q}(\QM(\fR))
$$
of the class $\bs_\alpha$ under the evaluation map 
$$
\ev_0: \QM(\fR) \owns f \mapsto f(0) \in \fR \,. 
$$
By Proposition \ref{p1}, every quasimap in the support of this class
satisfies $f(0)\in \fX$. Therefore, the obstruction theory for 
$\QM(X)$, restricted to the support of $\ev_0^* \, \bs_\alpha$ has 
a trivial factor 
\begin{equation}
\Obs_{\QM(X)} \Big|_{\supp \ev_0^* \, \bs_\alpha} 
\to  \hbar^{-1} \, \bigoplus \Hom(\cV_i\big|_0, \cV_i\big|_0)
\to 0 \,, \label{triv_factor}
\end{equation}
corresponding to the moment map equations at $0\in C$. 
We can take the kernel of \eqref{triv_factor} as a new reduced 
obstruction 
theory for $\QM(X)$ to produce a reduced virtual 
fundamental class 
$$
\cO^\vir_{\QM(X),\textup{reduced}}  \in 
K_{\bT\times \Ct_q}(\QM(\fR)) \,. 
$$

\subsubsection{}
The difference between the virtual fundamental class 
$\cO^\vir_{\QM(X)}$ and its reduced version is a factor of 
$$
\Delta_\hbar = \textup{Koszul complex of } 
\left(\bigoplus \hbar^{-1} \Hom(V_i,V_i) \right)\,. 
$$
We make the following 

\begin{Definition} \label{d2} We set 
  \begin{equation}
\fb_\alpha = \Delta_\hbar^{-1} \, \bs_\alpha\label{deffb}
\end{equation}
and we  define the product 
\begin{equation}
\ev_0^*(\fb_\alpha) \otimes \cO^\vir_{\QM(X)}  
\in K_{\bT\times \Ct_q} (\QM(X)) \label{evprod} 
\end{equation}
by the equality of K-classes 
\begin{equation}
\iota_{\QM(X),*} \, \left(\ev_0^*(\fb_\alpha) \otimes \cO^\vir_{\QM(X)}  \right)
\,\, = \,\, 
\ev_0^*(\bs_\alpha) \otimes \cO^\vir_{\QM(X),\textup{reduced}}  \label{evf} 
\end{equation}
on $\QM(\fR)$. 
\end{Definition}

In actual quasimap counting, one uses the so-called 
symmetrized virtual structure sheaves $\tO_\vir$, see Section 6.1 in
\cite{pcmi} and \eqref{tOvir} below. 
Those differ from 
$\cO^\vir$ by a twist by a line bundle, which is the same line bundle
on both sides in \eqref{evf}.

\subsubsection{}
The following is clear from construction

\begin{Proposition}\label{p2} The class \eqref{evprod} is an 
integral K-theory class which equals $\alpha$ for 
$$
\QM(X)_{\textup{degree}=0} \cong X \,. 
$$
Integral formulas for descendent insertions generalize 
verbatim to \eqref{evprod} with the insertion of the rational 
function \eqref{deffb}. 
\end{Proposition}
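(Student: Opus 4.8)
The plan is to unwind the definitions in Section~\ref{s_int} and check the two assertions (integrality and the degree-zero restriction) directly from the construction, then observe that the last sentence is automatic once the reduced obstruction theory has been set up. First I would recall that by Proposition~\ref{p1} the class $\bs_\alpha$ is supported on $\fX$, so every quasimap $f$ in the support of $\ev_0^*\,\bs_\alpha$ satisfies the moment-map equations at $0\in C$; this is exactly what produces the trivial quotient \eqref{triv_factor} of the obstruction theory $\Obs_{\QM(X)}$ along that support. Taking the kernel of \eqref{triv_factor} defines $\cO^\vir_{\QM(X),\textup{reduced}}$, and the difference between $\cO^\vir_{\QM(X)}$ and this reduced class is precisely the Koszul factor $\Delta_\hbar$ evaluated on the universal bundle $\cV_i|_0$. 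Since $\Delta_\hbar$ is exactly the denominator appearing in \eqref{deffb}, the defining equality \eqref{evf} makes sense as an identity of \emph{integral} K-theory classes on $\QM(\fR)$: the apparent pole of $\fb_\alpha=\Delta_\hbar^{-1}\bs_\alpha$ is cancelled by the $\Delta_\hbar$ that $\cO^\vir_{\QM(X),\textup{reduced}}$ is missing relative to $\cO^\vir_{\QM(X)}$. Pushing forward along $\iota_{\QM(X)}$ and using that this map is a closed embedding whose structure sheaf carries the Koszul resolution, one sees that the right-hand side of \eqref{evf} is $\iota_{\QM(X),*}$ of a genuine class on $\QM(X)$, which is by definition \eqref{evprod}; hence \eqref{evprod} is integral.

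Next I would treat the degree-zero locus. On $\QM(X)_{\textup{degree}=0}\cong X$ the bundles $\cV_i$ are the tautological bundles pulled back from $X$, the quasimap is an honest constant map, $\ev_0$ is the identity, and the obstruction theory has no quantum corrections. In particular $\Delta_\hbar|_{\QM(X)_{\textup{degree}=0}}$ is the Koszul complex of $\bigoplus\hbar^{-1}\Hom(V_i,V_i)$ built from the tautological bundles on $X$ itself, and $\cO^\vir_{\QM(X)}$ restricts to the structure sheaf $\cO_X$ while $\cO^\vir_{\QM(X),\textup{reduced}}$ restricts to $\Delta_\hbar$ (regarded as a complex on $X$). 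Restricting \eqref{evf} to this component and dividing by $\Delta_\hbar$ reduces the claim to $\ev_0^*(\bs_\alpha)|_X = \Delta_\hbar\cdot\alpha$, which is exactly the content of \eqref{bsres} in Proposition~\ref{p1} once one remembers that $\bdiota^*$ identifies a neighbourhood of $X$ inside $\fR$ with a neighbourhood of $X$, and that $\iota_{X,*}\alpha$ pushed into $\fR$ agrees, after the Koszul resolution of $\fX\subset\fR$, with $\Delta_\hbar\cdot\alpha$ on $X$. So \eqref{fbres}, i.e. $\fb_\alpha|_{K_\bT(X)}=\alpha$, follows.

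The last sentence of the statement --- that integral formulas for descendent insertions generalize verbatim to \eqref{evprod} with the rational insertion \eqref{deffb} --- is then essentially a bookkeeping remark: a descendent insertion is by construction a symmetric Laurent polynomial in the Chern roots of $\cV_i|_0$ capped against $\cO^\vir_{\QM(X)}$ (or its symmetrized variant $\tO_\vir$), and \eqref{evf} exhibits $\ev_0^*(\fb_\alpha)\otimes\cO^\vir_{\QM(X)}$ as the pushforward of exactly such a cap product from $\QM(X)$, namely $\ev_0^*(\bs_\alpha)$ against the reduced class, with $\bs_\alpha$ itself a polynomial in the $\cV_i|_0$ by Definition~\ref{d1} and the spanning statement \cite{mn}. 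Since the symmetrized twist differs on both sides of \eqref{evf} by the same line bundle (as noted after Definition~\ref{d2}), the localization/Mellin--Barnes manipulations that compute descendent vertices apply word for word, now producing the rational integrand $\fb_\alpha$.

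The one point that genuinely requires care --- and which I expect to be the main obstacle if one wants to be fully rigorous --- is the claim that $\Delta_\hbar$ is, on the nose and not merely up to the Euler class of something, the exact discrepancy between $\cO^\vir_{\QM(X)}$ and $\cO^\vir_{\QM(X),\textup{reduced}}$ along the support of $\ev_0^*\bs_\alpha$. This amounts to checking that the moment-map factor $\hbar^{-1}\bigoplus\Hom(\cV_i|_0,\cV_i|_0)$ really does split off as a direct summand of the \emph{deformation-obstruction} two-term complex there (so that quotienting by it is the same as a Koszul twist of the virtual structure sheaf), compatibly with the closed embedding $\iota_{\QM(X)}$ and its own Koszul resolution inside $\QM(\fR)$. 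Once that compatibility of the two Koszul complexes is in hand, \eqref{evf} is a well-posed definition and everything else is formal; I would therefore spend the bulk of the write-up making that splitting precise, presumably by pulling back the universal moment-map section on $\fR$ under $\ev_0$ and identifying its Koszul complex with the relevant piece of the virtual class, exactly as in the local model \eqref{triv_factor}.
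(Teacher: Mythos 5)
Your proposal is correct and takes essentially the same route as the paper, which gives no separate argument beyond declaring the proposition clear from construction: integrality is read off from the right-hand side of the defining equality \eqref{evf}, and the degree-zero claim reduces, via the Koszul self-intersection factor $\Delta_\hbar$ for $X\subset\fR$, to the restriction property \eqref{bsres} of Proposition \ref{p1}, while the splitting of the moment-map factor you worry about is exactly what the paper builds in through \eqref{triv_factor}. The only blemish is your parenthetical claim that $\cO^\vir_{\QM(X),\textup{reduced}}$ restricts to $\Delta_\hbar$ on the degree-zero component --- removing the moment-map obstruction there leaves the structure sheaf of the ambient stable locus, whose pullback to $X$ is $\cO_X$ (equivalently, the class one divides out of $\iota_{X,*}\cO_X$ is $\Delta_\hbar$, so the ``class on $X$'' would be $\Delta_\hbar^{-1}$, not $\Delta_\hbar$) --- but this slip does not affect your final, correct reduction to $\iota_X^*\,\bs_\alpha=\Delta_\hbar\cdot\alpha$.
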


\subsection{Equivalence of descendent and relative insertions}

\subsubsection{}\label{s_t1} 

Our next goal is to prove the following 

\begin{Theorem}\label{t1}
A relative insertion of $\alpha\in K_\bT(X)$ at $0\in C$ equals 
the descendent insertion of $\fb_\alpha$ at the same point 
in equivariant quasimap 
counts with arbitrary insertions at points away from $0\in C$.  
\end{Theorem}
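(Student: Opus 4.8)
The plan is to prove the equivalence of the two insertions by a degeneration/rigidity argument in the style used for the capped vertex in \cite{pcmi}. The relative insertion of $\alpha$ at $0$ is, by definition, computed on $\QM(X)_{\textup{relative }0}$ together with the resolved evaluation map $\tev$ to $X$; the descendent insertion of $\fb_\alpha$ is computed on $\QM(X)$ (or equivalently, via the construction of Section~\ref{s_int}, on $\QM(\fR)$) using $\ev_0^*(\fb_\alpha)$. The strategy is to reduce both sides to the \emph{capped vertex} with the respective insertion --- that is, to the two-pointed count with a relative point at $\infty$ --- using the fact that a relative insertion at $\infty$ plus a nonsingular insertion there is all that is needed to glue in an arbitrary insertion at a point away from $0$, by the standard degeneration formula. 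So first I would invoke the degeneration formula for quasimap counts: degenerate $C = \bP^1$ so that $0$ and the ``arbitrary insertion'' points separate onto different components, reducing the claim to the equality of the capped vertices with insertions $\alpha$ (relative) and $\fb_\alpha$ (descendent) respectively, glued against the \emph{same} object (the relative/relative glue data with the arbitrary insertions) on the other side.

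Next I would analyze the capped vertex with the descendent insertion $\fb_\alpha$. The key point is large framing vanishing from \cite{pcmi}: for sufficiently large framing $\bw$, quantum corrections to the capped vertex vanish, so the capped vertex with a \emph{fixed} insertion reduces to its degree-zero part. For a descendent insertion by a tautological class, the degree-zero part is just the K-theoretic pushforward along $\QM(X)_{\textup{degree }0} \cong X$, which by Proposition~\ref{p2} (the restriction \eqref{fbres}, or equivalently \eqref{bsres}) gives back $\alpha$ --- i.e.\ exactly the relative insertion of $\alpha$ in degree zero. The same large framing vanishing applies to the capped vertex with the relative insertion $\alpha$, whose degree-zero part is again $\alpha$. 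So in large framing the two capped vertices agree on the nose. The final step is to bootstrap from large framing down to arbitrary framing: here one uses that $\fb_\alpha$ as constructed in Definition~\ref{d1}--\ref{d2} is \emph{compatible with stabilization of the framing}. Concretely, adding a framing summand changes $\bs_\alpha = \bdiota^*\Stab(\alpha)$ in a controlled way governed by the eigenvector property of the boundary conditions \eqref{fvac}, and the weight bound of Proposition~\ref{p_weights} together with the integrality of Proposition~\ref{p2} forces the capped vertex --- which is a rational function with controlled poles --- to be \emph{determined} by its large-framing specializations via a rigidity (vanishing-at-infinity) argument in the equivariant variables $a$ in $\bA$.

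The main obstacle, and the technical crux acknowledged in the excerpt itself (Section~\ref{sp3}), is precisely this last rigidity step: showing that the capped vertex with insertion $\fb_\alpha$ equals the one with the relative insertion $\alpha$ for \emph{all} framings, not just large ones. The difficulty is that $\fb_\alpha$ is a genuinely rational, not polynomial, descendent, with poles along $\Delta_\hbar = 0$; one must show that the reduced obstruction theory of Section~\ref{s_int} exactly cancels these poles so that $\ev_0^*(\fb_\alpha)\otimes \cO^\vir_{\QM(X)}$ is an honest integral class (this is Proposition~\ref{p2}), and then that the resulting count has no pole at $q=\infty$ or $q=0$ and is bounded in the torus variables $\bA$ in the precise sense dictated by the weight bound \eqref{supps}. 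A Laurent expansion of the capped vertex in $a$-weights, using the degree bounds from Section~\ref{s_0} to show that every nonzero coefficient lives in a range where large-framing vanishing already pins it down, then yields the equality in general. In other words: establish integrality (Section~\ref{s_int}), establish the weight bounds (Section~\ref{s_0}), combine them into a rigidity statement for the capped vertex (Section~\ref{sp3}), and conclude via large framing vanishing and the degeneration formula that the descendent insertion $\fb_\alpha$ and the relative insertion $\alpha$ are interchangeable in any count with arbitrary insertions away from $0$.
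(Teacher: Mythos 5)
Your opening move --- degenerating the domain so that the statement reduces to the two-pointed count with a relative insertion at $\infty$ --- is exactly the paper's first step, but from there your argument diverges and does not go through. The paper never invokes large framing vanishing: it fixes the framing and shows directly that the operator \eqref{operglue}, $\alpha \mapsto \tev_{\infty,*}\big(\ev_0^*(\fb_\alpha)\otimes\tO_\vir\, z^{\deg}\big)$, is a Laurent polynomial in $q$ (properness and $\Ct_q$-invariance of $\tev_\infty$), factors it by $\Ct_q$-localization into the contribution at $\infty$ --- which tends to $\bfG$ as $q\to 0$ and to $1$ as $q\to\infty$, by the same analysis that shows $\bfG$ is $q$-independent --- and the descendent vertex with $\fb_\alpha$ at $0$, and then proves Proposition~\ref{p3}: that vertex tends to $\alpha$ as $q\to 0$ and stays bounded as $q\to\infty$. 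This is where the small ample slope \eqref{slope_choice}, the weight bound of Proposition~\ref{p_weights}, and the cancellation \eqref{LL} of $\Delta_\hbar$ against the enlarged polarization in \eqref{tOvirloc} are actually used: every positive-degree term acquires a factor $q^{(d,\cL)}$ with $(d,\cL)>0$, so it dies at $q\to 0$ and stays bounded at $q\to\infty$. A Laurent polynomial in $q$ bounded at both ends is constant, hence the operator equals its $q\to 0$ limit $\bfG$, which is the relative/relative count. The rigidity is in $q$, framing by framing --- not in the framing or the $\bA$-variables.

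The gap in your version is twofold. First, large framing vanishing does not apply to $\fb_\alpha$: it is a statement about a \emph{fixed} tautological insertion as $\bw$ grows, whereas $\fb_\alpha$ is neither polynomial (it carries $\Delta_\hbar$ in the denominator) nor independent of the framing --- it is built from stable envelopes on $\cM(\bv,\bw+\bv)$ and changes with $\bw$. Moreover your intermediate claim is inconsistent with the theorem itself: if the capped vertex with insertion $\fb_\alpha$ had no quantum corrections, it would equal $\alpha$, while the object it must match is $\bfG\alpha$, and the glue operator $\bfG$ has nontrivial $z$-dependence at any framing (large framing vanishing concerns descendent/relative counts, not relative/relative ones); ``both sides reduce to $\alpha$'' would force $\bfG=1$. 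In fact the content of Theorem~\ref{t1} is precisely that the quantum corrections of the capped vertex with $\fb_\alpha$ reproduce the full glue matrix. Second, the proposed descent from large to arbitrary framing via a Laurent expansion in the $\bA$-weights has no mechanism behind it: varying $a$ does not interpolate between different framing dimensions, and no specialization of equivariant variables converts a small-framing count into a large-framing one (the passage between framings in \cite{S2} uses genuinely more structure and yields a map in the opposite direction, with coefficients in $\Q(z,q)$). So the technical crux of Section~\ref{sp3} is not supplied by your outline; what is needed there is the $q\to 0,\infty$ analysis of the descendent vertex with the specific insertion $\fb_\alpha$, i.e.\ Proposition~\ref{p3}.
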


\noindent 
In Theorem \ref{t1} a certain alignement between the polarization 
used to define $\fb_\alpha$ and a polarization required in 
settings up the
 quasimap counts is understood. Recall that a polarization of 
$T^{1/2}$ of $X$ induces a virtual bundle $\cT^{1/2}$ on the 
domain of the quasimap and one defines the symmetrized
virtual structure sheaf by 
\begin{equation}
  \label{tOvir}
\tO_\vir = \cO_\vir \otimes \left( 
\cK_\vir \otimes \frac{\det \cT^{1/2}_\infty}
{\det \cT^{1/2}_0} \right)^{1/2} \,, 
\end{equation}
where the subscripts denote the fibers of $\cT$ at 
$0,\infty \in C$. 
The quasimap counts from \cite{pcmi} are defined using 
\eqref{tOvir}. Note that they depend on the polarization only 
via its determinant. 

In Theorem \ref{t1}, we assume that the determinants of the 
two polarizations are \emph{inverse} of each other, up to 
equivariant constants. In \cite{pcmi}, equivariant correspondences
are interpreted as operators from the fiber at $\infty$ to the 
fiber at $0$, which is why it is natural to use dual bases
for the fiber at $0$. Stable envelopes, 
in particular, change both the slope and polarization  to opposite
$$
(\cL,T^{1/2}) \mapsto (-\cL, \hbar^{-1} \left( T^{1/2}\right)^\vee 
) 
$$
under 
duality. It is easier to implement the flipping of the polarization in 
the statement of Theorem \ref{t1} than to work with the opposite 
polarization throughout the paper. 

With this change, the localization contributions at $0$ take
the form 
\begin{equation}
\tO_\vir = 
\frac{1}{\Ld_- \, \left(\cT^{1/2}_0\right)^{\!\vee}} \otimes \dots \label{tOvirloc}
\end{equation}
where the dots stand for terms with a  finite limit 
as $q^{\pm 1}\to \infty$ and 
\begin{equation}
\Ld_{-} = \sum_k (-1)^k \Lambda^k  \label{Ld-} \,. 
\end{equation}
See  Section 7.3 of \cite{pcmi} for details on the localization 
formula \eqref{tOvirloc}. 

\subsubsection{} 

The proof of Theorem \ref{t1} proceeds in several steps. 

As a first step, we can 
equivariantly degenerate $C$ to a union 
$$
C \leadsto C_1 \cup_{\textup{node}} C_2
$$
so that $0\in C_1\setminus \{\textup{node}\}$ and all 
other insertions lie in $C_2\setminus \{\textup{node}\}$. 
By the degeneration formula, it is therefore enough to show that
the counts in Theorem \ref{t1} coincide when we impose 
a relative insertion $\beta \in K_\bT(X)$ at $\infty\in C \cong
\bP^1$. 

\subsubsection{}

Since the count of quasimaps relative $0,\infty\in \bP^1$ is the
 glue matrix $\bfG$, the Theorem is equivalent to showing that 
the operator 
\begin{equation}
\alpha \mapsto \tev_{\infty,*} 
\left(\ev_0^*(\fb_\alpha) \otimes \tO_\vir \, z^{\deg} \right)
\in K_{\bT}(X)[q^{\pm 1}][[z]] \label{operglue}
\end{equation}
equals $\bfG$, where $\tev$ is the relative evaluation map 
as in \eqref{eq9}. Here we get polynomials in $q$ because 
the map $\tev_{\infty}$ is proper and $\Ct_q$-invariant. 

Recall that the glue matrix does not depend on $q$, which can be 
explicitly seen by its analysis as $q^{\pm 1} \to 0$ as in 
Section 7.1 of \cite{pcmi}. This analysis is based on 
$\Ct_q$-equivariant localization and we can apply the 
same reasoning to \eqref{operglue}.  

The $\Ct_q$-fixed quasimaps are constant on $\bP^1\setminus 
\{0,\infty\}$ and the contributions from $0$ and $\infty$ 
essentially decouple. The contributions from $\infty$ are 
literally the same as for the glue matrix. They are 
computed using the push-pull in the following 
diagram 
\begin{equation}
  \label{ppinf}
  \xymatrix{ & 
K\left(\QM(X)^{\Ct_q}_{\textup{nonsing at $0$, relative
        $\infty$}}\right) \ar[dr]_{\tev_{\infty,*}} 
\\ 
K(X) \ar[ur]_{\ev_0^*} && K(X) \,\,,}
\end{equation}
where we tensor with $\tO_\vir \, z^{\deg}$ on the middle 
stage. The  analysis in Section 7.1 of \cite{pcmi} shows 
\begin{equation}
  \label{contr_inf}
    \textup{operator from \eqref{ppinf}}
\to 
\begin{cases}
\bfG\,, & q\to 0 \,,   \\
1 \,, & q\to \infty \,. 
\end{cases}
\end{equation}

The contributions from $0\in C$ in the localization formula 
for are 
computed using a parallel push-pull diagram 
\begin{equation}
  \label{pp0}
  \xymatrix{ & 
K\left(\QM(X)^{\Ct_q}_{\textup{nonsing at $\infty$}}\right) 
\ar[dr]_{\ev_{\infty,*}} \\ 
K(\fX) \ar[ur]_{\ev_0^*} && K(X) \,\,,}
\end{equation}
which computes the so-called vertex with descendents, 
see Section 7.2 in \cite{pcmi}. The 
$\Ct_q$-fixed locus in \eqref{pp0} has a concrete 
description as a certain space of flags of quiver representations, see
Section 7.2 of \cite{pcmi} and also \cite{afo}. 

Since \eqref{operglue} is a product of the two operators, 
Theorem \ref{t1} follows from the following 

\begin{Proposition}\label{p3} The vertex with descendent 
$\fb_\alpha$ remains bounded in the $q\to \infty$ limit and 
goes to $\alpha$ in the limit $q\to 0$. 
\end{Proposition}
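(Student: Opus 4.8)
The plan is to analyze the vertex with descendent $\fb_\alpha$ via $\Ct_q$-equivariant localization in the $q^{\pm1}\to\infty$ limits, exactly as the vertex without descendents is analyzed in Section 7.2 of \cite{pcmi}, but keeping careful track of the extra factor $\ev_0^*(\fb_\alpha)$. Recall from \eqref{tOvirloc} that the localized symmetrized virtual structure sheaf contributes $1/\Ld_-\big(\cT^{1/2}_0\big)^{\!\vee}$ at $0$ times terms with finite limit as $q^{\pm1}\to\infty$. The $\Ct_q$-fixed locus in \eqref{pp0} is a space of flags of quiver representations, and on each component the fiber $\cV_i\big|_0$ together with the attracting/repelling weights of $\Ct_q$ is explicitly known. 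First I would write $\ev_0^*(\fb_\alpha)=\ev_0^*(\bs_\alpha)/\Delta_\hbar$ where, crucially, $\Delta_\hbar=\Ld_{-\hbar}\big(\bigoplus\hbar^{-1}\Hom(\cV_i\big|_0,\cV_i\big|_0)\big)$ evaluated at the fixed point; the point of Definition \ref{d2} and the reduced obstruction theory \eqref{triv_factor} is precisely that this $\Delta_\hbar$ cancels the corresponding moment-map factor inside $\cT^{1/2}_0$, so the combination $\fb_\alpha\otimes\tO_\vir$ has no spurious pole and the localization weights to be bounded are those of $\bs_\alpha\big|_0$ against the remaining Koszul factors.

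The second step is to feed in the weight bound of Proposition \ref{p_weights}: the $G$-weights occurring in $\bs_\alpha\big|_0$ lie in $\cL+\mathrm{conv}\big(\text{weights of }\Ld(T^{1/2}_{\cM(\bv,\bw+\bv)})^\vee_\star\big)$, and by \eqref{choice_pol} this polarization, restricted to $V'=V$, is $T^{1/2}X+\sum_i\hbar^{-1}\Hom(V_i,V_i)$, an honest $G$-module modulo balanced classes. On a $\Ct_q$-fixed component the $\Ct_q$-character of $\cV_i\big|_0$ is determined by the flag data, so substituting $x_{i,k}\mapsto q^{d_{i,k}}\cdot(\text{fixed-point value in }\bT\times T_G)$ turns the rational function $\fb_\alpha$ into a Laurent expression in $q$ whose exponents are controlled, term by term, by the convex hull above together with the exponents coming from $1/\Ld_-\big(\cT^{1/2}_0\big)^{\!\vee}$. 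The arithmetic to check is that, because the polarization entering $\bs_\alpha$ and the polarization entering $\tO_\vir$ are dual up to equivariant constants (the alignment hypothesis in Theorem \ref{t1}) and the slope was taken infinitesimally small in \eqref{slope_choice}, the $q$-degree of every monomial in $\ev_0^*(\fb_\alpha)\otimes\tO_\vir\,z^{\deg}$ is $\le 0$ as $q\to\infty$, with equality only for the degree-zero fixed point; this is the same rigidity/bounding mechanism as large framing vanishing, now localized to the single vertex. Summing the geometric series in the $z^{\deg f}$ variables then shows the $q\to\infty$ limit is finite.

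For the $q\to0$ limit I would argue that only the degree-zero constant quasimap survives. On any positive-degree $\Ct_q$-fixed component the same weight count, read with the opposite sign of the $\Ct_q$-weight, shows the contribution carries a strictly positive power of $q$ (equivalently, the attracting/repelling analysis of Section 7.1--7.2 of \cite{pcmi} applies verbatim once one observes that the $\fb_\alpha$-insertion shifts weights only inside the bounded range of Proposition \ref{p_weights}, which is not enough to overcome the $\cK_\vir^{1/2}$ and $\Ld_-$ factors). Hence these terms vanish as $q\to0$, and what remains is the degree-zero term: there $\QM(X)_{\deg=0}\cong X$, the moduli is just $X$ itself with $\ev_0=\mathrm{id}$, $\tO_\vir=\cO_X$, and $\fb_\alpha\big|_{K_\bT(X)}=\alpha$ by \eqref{fbres}/\eqref{bsres} together with the cancellation of $\Delta_\hbar$ in Definition \ref{d2} (this is the content of Proposition \ref{p2}). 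So the limit is exactly $\alpha$. The main obstacle I anticipate is the bookkeeping in the first two steps: verifying that the convex-hull bound of Proposition \ref{p_weights}, after the substitution $x_{i,k}\mapsto q^{d_{i,k}}(\cdots)$ and after dividing by $\Ld_-\big(\cT^{1/2}_0\big)^{\!\vee}$, really produces nonpositive $q$-exponents on every fixed component and every effective degree --- i.e.\ that the stable-envelope normalization of $\bs_\alpha$ is tuned precisely to match the $\tO_\vir$ denominator. Everything else is a routine adaptation of the localization computations already in \cite{pcmi}.
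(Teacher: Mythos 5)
Your overall route is the paper's: $\Ct_q$-localization, reading the $\Ct_q$-weights of $\cV_i\big|_0$ as $q^{d_{i,j}}$, cancelling the $\Delta_\hbar$ in $\fb_\alpha=\bs_\alpha/\Delta_\hbar$ against the extra term of the enlarged polarization \eqref{choice_pol} (this is exactly \eqref{LL}), and then bounding the numerator by Proposition \ref{p_weights} against the denominator $\Ld_-\big(\cT^{1/2}_{\cM(\bv,\bw+\bv)}\big)^{\!\vee}_0$ coming from \eqref{tOvirloc}. But the step you explicitly defer as ``the arithmetic to check'' is the whole proof, and the mechanism you propose for it (rigidity as in large framing vanishing, the $\cK_\vir^{1/2}$ and $\Ld_-$ factors being ``too strong to overcome'') is not what makes it work. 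The missing idea is the degree--slope pairing: Proposition \ref{p_weights} bounds the Newton polygon by $\cL+\mathrm{conv}(\cdots)$, and under the substitution $x_{i,k}\mapsto q^{d_{i,k}}(\cdots)$ the slope term becomes the degree-dependent factor $q^{(d,\cL)}$, with the inclusion \emph{strict} whenever $(d,\cL)\neq 0$ (because it holds for an open set of slopes). Since $\cL$ in \eqref{slope_choice} is a small \emph{ample} bundle, $0<(d,\cL)\ll 1$ exactly for effective $d\neq 0$, and this tiny positive shift is what separates $d=0$ from $d\neq 0$: without it your quoted bound is identical for all degrees, so it could not simultaneously kill the positive-degree terms at $q\to 0$ and leave the degree-zero term equal to $\alpha$. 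As stated, your claim that every monomial has $q$-degree $\le 0$ ``with equality only for the degree-zero fixed point'' is both unproved and stronger than what is true or needed: for $d\neq 0$ one only gets boundedness as $q\to\infty$ (which suffices, since by \eqref{contr_inf} the contribution from $\infty$ tends to $1$ there), while vanishing holds as $q\to 0$; the asymmetry is precisely the sign of $(d,\cL)$.

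Two smaller corrections: there is no ``summing of a geometric series in $z$'' --- the vertex is a formal power series in $z$ and all limits are taken coefficientwise, the paper's statement being about each fixed degree $d$ separately; and the appeal to the analysis of Sections 7.1--7.2 of \cite{pcmi} ``applying verbatim'' is misleading, since the bare descendent vertex does not have this $q\to 0$ behavior in general --- the vanishing of positive-degree terms is special to the insertion $\fb_\alpha$ and is exactly what the Newton-polygon estimate \eqref{degq}, with its $(d,\cL)$ shift and strictness, is designed to prove. (You would also want the paper's small observation that the restriction $\ev_0^*\bs_\alpha$ is independent of the quiver maps, so the weight count may be done at the point $\star$ with all maps zero.)
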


Note that a vertex with descendents is a power series 
in $z$ and Proposition \ref{p3} implies all terms of 
nonzero degree in $z$ in that series vanish in the $q\to 0$ 
limit.

\subsubsection{}\label{sp3} 
\begin{proof}[Proof of Proposition \ref{p3}]

On the fixed locus, 
the bundles $\cV_i$ can be written in the form 
$$
\cV_i = \oplus_j \cO_C(d_{i,j}[0]) 
$$
with their natural linearization. This means that their fiber 
$\cV_i\big|_\infty$ at infinity is a trivial $\Ct_q$-module 
while the $\Ct_q$-weights in the fiber $\cV_i\big|_0$ at zero 
are $\{q^{d_{i,j}}\}$. This means that the insertion 
$\ev_0^* \bs_\alpha$ is a Laurent polynomial in $\{q^{d_{i,j}}\}$ 
with coefficients in K-theory of the fixed locus. 

This polynomial does not depend on the quiver maps and 
therefore we may assume that all quiver maps are zero. 
The Newton polygon of $\ev_0^* \bs_\alpha$  is thus 
bounded by the formula in Proposition 
\ref{p_weights}. We find 
\begin{equation}
\textup{$\Ct_q$-weights of $\ev_0^* \bs_\alpha$} 
\subset 
(d,\cL) +
\textup{conv} \left( \textup{weights of } \Ld 
\left(\cT^{1/2}_{\cM(\bv,\bw+\bv)}\right)^{\!\vee}_0 \right) 
\label{degq}
\end{equation}
and this inclusion is strict if $(d,\cL)\ne 0$ because it is true 
for an open set of $\cL$. 

Here $\cT^{1/2}_{\cM(\bv,\bw+\bv)}$ is the virtual bundle on $C$ obtained by plugging 
the bundles $\cV_i$ and $\cW_j$ into the formula 
\eqref{choice_pol} and subscript refers to its fiber at $0\in C$. 
As observed in Section \ref{s_actual}, the exterior 
algebra here is a well-defined $\Ct_q$-module. 
Also in \eqref{degq} we have the natural pairing of the 
degree of the quasimap 
$$
d = (d_i) \in \Z^I = H_2(X,\Z) \,, \quad d_i = \sum_j d_{i,j} 
$$
with a fractional bundle $\cL \in \Pic(X) \otimes \R$. 
The moduli spaces of quasimaps of degree $d$ are 
empty unless $d$ is effective, see Section 7.2 in 
\cite{pcmi}, so we assume that $d$ 
is effective in what follows. Since 
$\cL$ was assumed to be an ample bundle, we have
$$
(d,\cL) = 0  \Leftrightarrow d = 0 \,.
$$

{}From \eqref{choice_pol}, we have 
\begin{equation}
\Ld_- 
\left(\cT^{1/2}_{\cM(\bv,\bw)}\right)^{\!\vee}_0 = 
\frac{1}{\Delta_\hbar} 
\Ld_-
\left(\cT^{1/2}_{\cM(\bv,\bw+\bv)}\right)^{\!\vee}_0  \label{LL}
\end{equation}
and therefore from \eqref{tOvirloc} we conclude 
$$
q^{-(d,\cL)} \ev_0^* \fb_\alpha \otimes \tO_\vir  \to 0 \,, 
\quad q\to 0,\infty \,, \quad d\ne 0 \,. 
$$
Since $\cL$ was assumed to be a very small ample bundle, we 
have 
$$
0 < (d,\cL) \ll 1 \,, \quad d\ne 0 \,. 
$$
Therefore for $d\ne 0$ we have 
$$
\ev_0^* \fb_\alpha \otimes \tO_\vir  \to 
\begin{cases}
0 \,, \quad & q\to 0 \,, \\
\textup{bounded} \,, \quad & q\to \infty \,,
\end{cases}
$$
as was to be shown. 
\end{proof}

\section{Reformulations and examples}

\subsection{R-matrices and Bethe eigenfunctions} 
\label{sR2} 

\subsubsection{}

In the setup of Section \ref{s_ZG'} consider the R-matrix 
for the action of $\bU$
\begin{equation}
R: \Unstab^{-1} \circ \Stab \in \End K_{\bT \times G'} 
\left(\cM(\bv,\bw+\bv)^{\bU}\right)_\textup{localized}\label{defR}
\end{equation}
where the map $\Unstab$ is defined as in \eqref{Stab1} 
with the same choice of slope and polarization, but the 
opposite choice of the $1$-parameter subgroup. 

Our next goal is to express $\bs_\alpha$ in terms of 
the restriction of $R(\alpha)$ to $\cM(\bv,\bv)$ in \eqref{Xfix} and 
more concretely in term of its restriction to the $G'$-fixed point 
$\star \in \cM(\bv,\bv)$ as in \eqref{star}. Recall that $\bs_\alpha$ is 
completely determined by its restriction to the point 
$\star$. 

\subsubsection{}

By our choice of the $1$-parameter subgroup, $\cM(\bv,\bv)$ was 
at the bottom of the attracting order among components of the 
$\bU$ fixed locus. Since this order is reversed for $\Unstab$, we
have 
\begin{equation}
\Unstab(\beta)\big|_{\cM(\bv,\bv)} = \beta \, \otimes \Ld_{-} 
\left(
N_\textup{repell}^\vee \right) \otimes \dots \,, \label{Unstab_dots}
\end{equation}
for any $\beta \in K_{T\times G'}(\cM(\bv,\bv))$, 
where dots stand for a certain line bundle 
and $N_\textup{repell}$ is the repelling part of the 
of the normal bundle $N$ to $\cM(\bv,\bv)$. 
We have 
\begin{equation}
N = \underbrace{\sum_i \Hom(W_i,V_i)}_{\textup{attracting for
    $\Unstab$}} + \underbrace{\hbar^{-1} \sum_i \Hom(V_i,W_i)}
_{\textup{repelling for
    $\Unstab$}} \,. \label{N}
\end{equation}
Fixing the line bundle in \eqref{Unstab_dots}
 requires fixing a polarization 
of $X$. For simplicity, we \emph{assume}
 that the polarization of the 
framing maps for $X$ is the same as in the new 
framing terms in \eqref{choice_pol}, that is 
\begin{equation}
  \label{T1/2X}
  T^{1/2}X = \hbar^{-1} \sum_i \Hom(V_i,W_i) +  
\textup{non-framing terms}  \,.
\end{equation}
Recall from Section \eqref{s_orient} that such choice 
of orientation on 
framing edges was dependent on the stability parameter
$\theta$, and that both orientation and polarization should be 
flipped if the entries of $\theta$ change sign. 

\subsubsection{}

With the assumption \eqref{T1/2X},
the repelling directions in \eqref{N} 
coincide with the normal directions 
chosen by polarization and hence the dots in \eqref{Unstab_dots}
are trivial. In other words
\begin{equation}
\Unstab(\beta)\big|_{\cM(\bv,\bv)} = \beta \, \otimes \, \bPi \,.\label{Unstab}
\end{equation}
where
\begin{equation}
  \bPi  = \Ld_{-} 
\left(
\hbar \sum_i \Hom(W_i,V_i) \right)  
= \prod_{i\in I} \prod_{k=1}^{\bv_i} \prod_{l=1}^{\bw_i} 
  (1-\hbar x_{i,k}/a_{i,l})  \,. 
\label{bPi}
\end{equation} 
The variables $x_{i,k}$ and $a_{i,l}$ in \eqref{bPi} 
are the Chern roots of $V_i$ and $W_i$, respectively, as 
in \eqref{torus A}. 
We deduce the following 

\begin{Proposition}\label{p4} We have 
  \begin{equation}
    \label{sR}
    \bs_\alpha \big|_0= \bPi \,\, R(\alpha) \big|_\star 
\,.   
  \end{equation}
\end{Proposition}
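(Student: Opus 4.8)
The plan is to combine Definition \ref{d1}, which gives $\bs_\alpha = \bdiota^*\Stab(\alpha)$, with the definition \eqref{defR} of the $\bU$-R-matrix, $R = \Unstab^{-1}\circ\Stab$, and the formula \eqref{Unstab} for how $\Unstab$ acts on restrictions to the component $\cM(\bv,\bv)$. The key observation is that $\bs_\alpha$ is a polynomial in the universal bundles on $\fR$, hence determined by its restriction to the origin $0\in\fR$, and under $\bdiota$ the origin maps to the $G'$-fixed point $\star$ of \eqref{star}; so it suffices to compute $\Stab(\alpha)\big|_\star$ and express it through $R(\alpha)\big|_\star$.

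First I would write $\Stab(\alpha) = \Unstab(R(\alpha))$, which is just \eqref{defR} rearranged, valid in localized equivariant K-theory. Restricting both sides to the fixed point $\star\in\cM(\bv,\bv)\subset\cM(\bv,\bw+\bv)$ and invoking \eqref{Unstab} — with the simplifying polarization assumption \eqref{T1/2X} that kills the auxiliary line bundle (``dots'') — gives $\Stab(\alpha)\big|_\star = \bPi\cdot R(\alpha)\big|_\star$, where $\bPi = \Ld_-\!\big(\hbar\sum_i\Hom(W_i,V_i)\big)$ as in \eqref{bPi}. Here one must be slightly careful that $R(\alpha)\big|_{\cM(\bv,\bv)}$ is a well-defined (nonlocalized, or at least restrictable) class, which holds because $\cM(\bv,\bv)$ sits at the bottom of the $\bU$-attracting order so that the triangularity of stable envelopes makes the composition $\Unstab^{-1}\circ\Stab$ upper-triangular with invertible diagonal there. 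Then I would translate this to $\fR$: since $\bdiota(0)=\star$ and the identification of $G$ with $G'$ via \eqref{star} intertwines the two torus actions, restriction to $0$ on the $\fR$ side equals restriction to $\star$ on the $Y/G$ side, yielding $\bs_\alpha\big|_0 = \Stab(\alpha)\big|_\star = \bPi\,R(\alpha)\big|_\star$.

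The main obstacle is bookkeeping the line-bundle (``dots'') terms in the stable/unstable envelope normalizations and confirming they cancel precisely under the polarization choices \eqref{choice_pol} and \eqref{T1/2X}. Specifically, $\Unstab$ carries a normalization twist by $\Ld_-(N_\textup{repell}^\vee)$ together with a half-determinant line bundle depending on the polarization; the content of the computation leading to \eqref{Unstab} is that, with \eqref{T1/2X}, the repelling normal directions of $\cM(\bv,\bv)$ — namely $\hbar^{-1}\sum_i\Hom(V_i,W_i)$ from \eqref{N} — coincide with the directions selected by the polarization, so the half-determinant twist trivializes the discrepancy and leaves exactly $\bPi$. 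I would also double-check the $\hbar$-weights: $\Ld_-(N_\textup{repell}^\vee)$ with $N_\textup{repell}=\hbar^{-1}\sum_i\Hom(V_i,W_i)$ produces $\prod(1-\hbar x_{i,k}/a_{i,l})$ after the dualization-and-determinant manipulation, matching \eqref{bPi1}. Once these normalization identities are in hand, the proof is immediate.

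\begin{proof}
By Definition \ref{d1}, $\bs_\alpha = \bdiota^*\Stab(\alpha)$, and as noted in Section \ref{s_0}, $\bs_\alpha$ is a polynomial in the universal bundles, hence determined by $\bs_\alpha\big|_0$, with $\bdiota(0)=\star$ in the notation of \eqref{star}. From the definition \eqref{defR} of the R-matrix we have $\Stab(\alpha) = \Unstab\big(R(\alpha)\big)$ in localized K-theory. Because $\cM(\bv,\bv)$ lies at the bottom of the $\bU$-attracting order (Section \ref{s_ZG'}), the triangularity of stable envelopes shows that $R(\alpha)$ restricts to a well-defined class on $\cM(\bv,\bv)$. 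Restricting the identity $\Stab(\alpha) = \Unstab\big(R(\alpha)\big)$ to the $G'$-fixed point $\star\in\cM(\bv,\bv)$ and using \eqref{Unstab}, which under the polarization assumption \eqref{T1/2X} has trivial ``dots'', we obtain
\begin{equation*}
\Stab(\alpha)\big|_\star = \Unstab\big(R(\alpha)\big)\big|_\star = \bPi\cdot R(\alpha)\big|_\star\,,
\end{equation*}
with $\bPi$ as in \eqref{bPi}. Finally, the identification of $G$ with $G'$ furnished by the isomorphisms $V'_i\xrightarrow{\sim}V_i$ in \eqref{star} intertwines the $G$-action at $0\in\fR$ with the $G'$-action at $\star$, so $\bs_\alpha\big|_0 = \bdiota^*\Stab(\alpha)\big|_0 = \Stab(\alpha)\big|_\star$, which gives \eqref{sR}.
\end{proof}
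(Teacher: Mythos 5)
Your proposal is correct and follows essentially the same route as the paper: rewrite $\Stab(\alpha)=\Unstab(R(\alpha))$, restrict to $\star$ using \eqref{Unstab} (with the ``dots'' trivialized by the polarization choice \eqref{T1/2X}), and identify $\bs_\alpha\big|_0$ with $\Stab(\alpha)\big|_\star$ via $\bdiota(0)=\star$. The only point worth stating a touch more explicitly — though the paper is equally terse — is that since $\cM(\bv,\bv)$ is extremal in the $\bU$-attracting order, the components of $R(\alpha)$ supported on the other fixed components (e.g.\ on $X$) have $\Unstab$ supported away from $\cM(\bv,\bv)$ and hence do not contribute at $\star$, so that \eqref{Unstab} may be applied componentwise.
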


\subsubsection{}\label{s_Bax} 
It remains to characterize the fiber at $\star$, which is, 
abstractly, a linear form 
\begin{equation}
K_{G'}(\cM(\bv,\bv)) \owns \cF \mapsto
 \cF\big|_\star = \chi(\cF \otimes \cO_\star) \in
 K_{G'}(\pt) \label{fiber_star}\,, 
\end{equation}
in representation-theoretic terms. 

The structure sheaf 
$$
\cO_\star \in K_{G'}(\cM(\bv,\bv))
$$
of the $G'$-fixed point \eqref{star} is an eigenvector of operators 
of multiplication in $K_{G'}(\cM(\bv,\bv))$, namely 
\begin{equation}
\cF \otimes \cO_\star = \cF\big|_\star \cdot \cO_\star \label{mulF}
\end{equation}
for any $\cF\in K_{G'}(\cM(\bv,\bv))$. 

Following \cite{MO}, we recall how express generators of the
commutative algebra of operators \eqref{mulF} in terms of 
the vacuum matrix elements of R-matrices. 
These are operators in $K_{G'}(\cM(\bv,\bv))$ defined by
\begin{equation}
R_{\bw,\varnothing,\varnothing} (\beta) = R(\beta)|_{\cM(\bv,\bv)} 
\,, \label{R00}
\end{equation}
where $R$ is our current R-matrix defined in \eqref{defR}. 
Its dependence on the dimension 
vector $\bw$ is made explicit in \eqref{R00}. Obviously 
\begin{equation}
R_{\bw,\varnothing,\varnothing} = \lim_{z\to 0} 
\tr_\textup{1st factor} (z^{\bv} \otimes 1) R \label{Bax}
\end{equation}
and so the operators \eqref{R00} are the limit of Baxter's commuting 
transfer matrices as $z\to 0$. 

In the description \eqref{N} of the normal bundle, the 
repelling direction for $\Stab$ are the attracting directions 
for $\Unstab$ and they are precisely opposite to the polarization. 
Therefore
$$
\Stab(\beta) = \bPi' \otimes \beta \,, \quad \beta \in 
K_{\bG'}\left(\cM(\bv,\bv)\right) \,, 
$$
where 
\begin{align}
  \bPi'  &= \hbar^{\frac14\rk N} \, \Ld_{-} 
\left(
\sum_i \Hom(W_i,V_i) \right) \notag  \\
&= \prod_{i\in I} \prod_{k=1}^{\bv_i} \prod_{l=1}^{\bw_i} 
  \hbar^{1/2} (1-x_{i,k}/a_{i,l})  \,. 
\label{bPip}
\end{align} 
{}From this and \eqref{Unstab} it follows that 
\begin{equation}
R_{\bw,\varnothing,\varnothing}  = 
\frac{\bPi'}{\bPi\phantom{{}'}} \otimes \textup{---} \,\, 
\in \End 
K_{G'} (\cM(\bv,\bv)) \otimes \Q(\bA') \,. 
\end{equation}

\subsubsection{}

Recall that $\bA'\subset G'$ denote the maximal torus. 
Extending the analysis of Section \ref{s_orient}, 
it is easy to see that 
$$
\cM(\bv,\bv)^{\bA'}_{\textup{component of $\star$}} = 
\prod_i \cM(\delta_i,\delta_i)^{\bv_i}  \,. 
$$
This is a vector space with origin $\star$. The Weyl 
group of $G'$ acts on it by permutations of factors. 
Since the K-theory of this fixed component is trivial, 
we have the following 

\begin{Proposition}
 The structure sheaf $\cO_\star$ is the unique, up to 
multiple, eigenvector of the operators 
$R_{\bw,\varnothing,\varnothing}$ with eigenvalue 
\begin{equation}
\label{eigR00} 
  R_{\bw,\varnothing,\varnothing} (\cO_\star) = 
\left. \frac{\bPi'}{\bPi\phantom{{}'}} \right|_{x_{i,k}=a'_{i,k}} \, 
\cO_\star 
\end{equation}
and \eqref{fiber_star} is the unique, up to multiple, linear form in
the dual of this eigenspace. The normalization may be fixed by 
e.g.\ \eqref{fbres}. 
\end{Proposition}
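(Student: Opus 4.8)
The statement has two parts: first, that $\cO_\star$ is, up to scalar, the unique eigenvector of the commuting operators $R_{\bw,\varnothing,\varnothing}$ with the eigenvalue \eqref{eigR00}; and second, that \eqref{fiber_star} is the unique (up to scalar) linear functional vanishing on the complementary $R_{\bw,\varnothing,\varnothing}$-eigenspaces. I would prove both by reducing to the $\bA'$-fixed locus and using that the $R_{\bw,\varnothing,\varnothing}$ are simultaneously diagonalizable with a spectrum that separates fixed points.

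\textbf{Step 1: identify the fixed locus and the relevant eigenvalue.} As already recorded just before the Proposition, the component of $\cM(\bv,\bv)^{\bA'}$ containing $\star$ is $\prod_i \cM(\delta_i,\delta_i)^{\bv_i}$, a vector space (by \eqref{Mdd}) with $\star$ as its origin, on which the Weyl group $W_{G'}$ acts by permuting factors. Localization to $\bA'$-fixed points gives a basis of $K_{\bT\times G'}(\cM(\bv,\bv))_\textup{localized}$ indexed by fixed points, and the classes $\cF \in K_{G'}(\cM(\bv,\bv))$ act on this basis diagonally, with $\cF$ acting on the fixed point $p$ by the scalar $\cF|_p$. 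In particular $\cO_\star$ is supported at $\star$ after localization, so \eqref{mulF} holds, and restricting \eqref{Unstab}, \eqref{bPip} to $x_{i,k}=a'_{i,k}$ yields the eigenvalue in \eqref{eigR00}. This is the routine bookkeeping part.

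\textbf{Step 2: simplicity of the eigenvalue.} The crux is to show no other $\bA'$-fixed point of $\cM(\bv,\bv)$ shares the eigenvalue $\frac{\bPi'}{\bPi}\big|_{x=a'}$ of $R_{\bw,\varnothing,\varnothing}$. Since $R_{\bw,\varnothing,\varnothing}$ equals $\frac{\bPi'}{\bPi}\otimes(\text{---})$ where $(\text{---})$ is multiplication by a tautological class (the $\bchi$ of an R-matrix matrix element, which by \cite{MO} generates the whole Baxter algebra $\cB_0$), its eigenvalue at a fixed point $p$ is $\frac{\bPi'}{\bPi}\big|_{x=x(p)}$ with $x(p)$ the Chern roots of $V_i$ at $p$. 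So I must argue that among the $\bA'$-fixed points in the component of $\star$ (which is a single vector space, hence has $\star$ as its \emph{only} torus-fixed point after one fixes which component one is in — this is the point of the computation $\cM(\delta_i,\delta_i)\cong \C^{2\mathbf g}$), and more broadly on all of $\cM(\bv,\bv)$, the set of Chern roots $\{x_{i,k}(p)\}$ — equivalently the eigenvalues of the tautological generators of $\cB_0$ — takes the value $\{a'_{i,k}\}$ only at $\star$. For this I'd invoke the standard fact that the tautological classes separate the fixed points of a Nakajima variety (they span $K_{\bT}(X)$ by \cite{mn}, and their joint spectrum is $\Spec K_\bT(X)$, finite over $\bT$ by \eqref{SpecK}), together with the observation that $\star$ is characterized inside $\cM(\bv,\bv)$ by having $V'_i\xrightarrow{\sim}V_i$ with all other maps zero, forcing the Chern roots to be exactly $\{a'_{i,k}\}$. \textbf{This simplicity statement is where I expect the real work: one must rule out accidental coincidences of eigenvalues at other fixed points, which ultimately rests on the geometry of $\cM(\bv,\bv)$ and the fact that the framing torus $\bA'$ acts with isolated, well-separated weights on the relevant tautological bundles.}

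\textbf{Step 3: uniqueness of the linear form.} Since $R_{\bw,\varnothing,\varnothing}$ is diagonalizable with $\cO_\star$ spanning its $\frac{\bPi'}{\bPi}|_{x=a'}$-eigenline, the dual operator on the dual space is also diagonalizable and its eigenline for the same eigenvalue is one-dimensional; the functional \eqref{fiber_star}, being $\cF\mapsto \chi(\cF\otimes\cO_\star)$, pairs nontrivially with $\cO_\star$ and annihilates every other eigenspace (by orthogonality of distinct eigenspaces of the self-adjoint-up-to-twist operators $R_{\bw,\varnothing,\varnothing}$ under the K-theoretic pairing $\chi$, or simply because in the fixed-point basis $\chi(\cF\otimes\cO_\star)$ reads off the $\star$-coefficient). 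Hence it spans the eigenline in the dual, giving uniqueness up to scalar. Finally, the normalization is pinned down by \eqref{fbres}: requiring $\fb_\alpha|_{K_\bT(X)}=\alpha$, equivalently $\bs_\alpha|_{\text{nbhd of }X}=\iota_{X,*}\alpha$ as in Proposition \ref{p1}, fixes the overall constant. This last step is formal once Steps 1–2 are in hand.
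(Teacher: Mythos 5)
Your proposal is correct and takes essentially the same route as the paper: identify $R_{\bw,\varnothing,\varnothing}$ as multiplication by the tautological class $\bPi'/\bPi$, observe via \eqref{mulF} that $\cO_\star$ is an eigenvector with eigenvalue $\left.\bPi'/\bPi\right|_{x=a'}$, and get uniqueness (and the dual statement) from the fact that the fixed component of $\star$ is a vector space with trivial K-theory, so its contribution to the localized K-theory is one-dimensional. The eigenvalue-separation issue you flag as the real work is precisely what the paper leaves implicit (for generic equivariant parameters the character $x_{i,k}\mapsto a'_{i,k}$ is not attained on any other fixed component, since there the Chern roots acquire nontrivial $\hbar$- or loop-weight factors), so your write-up simply spells out the paper's compressed argument.
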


To connect with the notations of Section \ref{sR1} of 
the Introduction, it suffices to make the inverse 
substitution $a'_{i,k} = x_{i,k}$. 

\subsection{Example: $\cUl$}\label{cUl} 

\subsubsection{}

Our goal here is to produce an explicit basis of the functions 
$\fb_\alpha$ for quivers of cyclic type $\Ah_{\ell-1}$ with $\ell$
vertices. The corresponding Nakajima varieties are moduli spaces
of framed sheaves, including Hilbert schemes of points,
 on the $A_{\ell-1}$-surfaces, that is, minimal 
resolutions of 
$$
x y = z^{\ell} \,,
$$
starting with the affine plane $A_0=\C^2$ for $\ell=1$. 
In particular, K-theoretic counts of 
quasimaps to these Nakajima varieties are 
directly related to K-theoretic Donaldson-Thomas theory of 
threefold fibered in $A_{\ell-1}$-surfaces. 

The Lie algebra $\fg$ corresponding to the cyclic quiver 
is the affine Lie algebra $\fgl_\ell$, hence the action of
a double affine algebra $\cUl$ on the K-theories of these
Nakajima varieties. 
Its direct link to  important questions in enumerative geometry and 
mathematical physics makes $\cUl$ a very interesting object of 
study. See in particular \cite{NeTh} for a detailed discussion and 
many references. 

As a special case, cyclic quiver varieties include quiver 
varieties for the linear quiver $A_\ell$, for which we recover the action of 
$\cU_\hbar(\fgl_{\ell+1})$ and the formulas of Tarasov and Varchenko. 
The connection between those formulas and stable envelopes
has already been observed in \cites{RTV1,RTV2}. 

\subsubsection{}
For explicit formulas, it is convenient to choose a particularly
symmetric polarization of $X$. We start with a polarization 
\begin{equation}
T^{1/2} = \bigoplus_{\circ\,  \to \, \circ'} 
\Hom(\, \circ \, , \, \circ' \,) - 
\bigoplus_{i} 
\Hom(V_i , V_i) 
\label{pol0} 
\end{equation}
obtained from an orientation of 
the framed quiver in Figure  \ref{f_quiver}. The first 
sum in \eqref{pol0} is over all oriented edges. 
The weights in the corresponding stable envelopes are 
then bounded by the weights in $\Ld_- \left(T^{1/2}\right)^\vee$, 
which is an product of expressions like 
\begin{equation}
\Ld_- \Hom(V,V')^\vee = \prod (1-x_i/x'_j)\label{LdHom} \,. 
\end{equation} 
Here $\Ld_-$ is the alternating sum of exterior powers as 
in \eqref{Ld-} and $\{x_i\},\{x'_j\}$ are the Chern roots of 
$V$ and $V'$, respectively.

We define 
\begin{align}
  \Ldi \Hom(V,V') &= \Ld_- \Hom(V,V')  \otimes \left(\det
                    V\right)^{\rk V'} \notag \\
& = \aroof \left(\Hom(V',V)\right) \otimes \left(\det
                    V\right)^{\frac12 \rk V'} \otimes \left(\det
                    V'\right)^{\frac12 \rk V} \notag \\
& =\prod (x_i-x'_j) \label{Ldi}
\end{align}
which is, up to a sign, symmetric in $V$ and $V'$. 
Since \eqref{LdHom} and \eqref{Ldi} differ by a sign and a line 
bunde, we have 
\begin{equation}
\Ldi \, T^{1/2} = \pm \, \Ld_- \left(T^{1/2}_\diamond
\right)^\vee \label{Tdia}
\end{equation}
for a certain polarization $T^{1/2}_\diamond$. In what follows, 
we consider stable envelopes with this polarization; their weights 
are bounded by \eqref{Tdia}. 

It is convenient to extend the definition \eqref{Ldi} by 
linearity in the second factor 
\begin{equation}
  \label{Ldiy}
  \Ldi \left(\Hom(V,V') \otimes M \right)=  \Ldi \Hom(V,V' \otimes M) 
=\prod_{i,j,k} (x_i- m_k x'_j ) 
\end{equation}
where $M$ is a multiplicity bundle and $\{m_k\}$ are its Chern 
roots.  Recall that for Nakajima varieties may have nontrivial 
automorphisms acting on edge multiplicity spaces. The rank 
of the group of such automorphisms is the 1st Betti number 
of the quiver. A review of these basis facts may be found 
e.g.\ in the introductory material in \cite{MO}.

\subsubsection{}
Let $\bA\subset \bT$ denote the subtorus preserving the 
symplectic form $\omega$. The torus $\bA$ 
includes a maximal torus of the framing group $GL(W)$ and 
an additional $\Ct_\textup{loop}$ for the loop in the quiver. 
In the moduli of sheaves interpretation, this $\Ct_\textup{loop}$ 
acts by symplectic automorphisms of the surface. 

We have 
\begin{equation}
X^\bA = \bigcup_{\sum \vt^{(ij)} = \bv} \,\, 
\prod_{i\in I} \prod_{j=1}^{\bw_i} \cM_\textup{linear}(\vt^{(ij)},
\delta_i)\label{XA1}
\end{equation}
where the $\cM_\textup{linear}$ denotes the Nakajima variety 
corresponding to the infinite linear quiver $A_\infty$ and the 
equality $\sum \vt^{(ij)} = \bv$ in \eqref{XA1} involves
summing over the fibers of the map 
\begin{equation}
A_\infty \, \xrightarrow{\quad \textup{universal cover}\quad} \,
\Ah_\ell \label{AiAl} \,. 
\end{equation}
The fixed locus \eqref{XA1} may be interpreted as a Nakajima 
variety associated to a (disconnected) fixed-point quiver 
\begin{equation}
Q^\bA =  \textup{$|\bw|$-many copies of $A_\infty$}  
\label{QbA}
\end{equation}
with dimension vector $\vt = \big(\vt^{(ij)}_k\big)$, 
where $|\bw|= \sum \bw_i$. 

\subsubsection{}

Note that 
\begin{equation}
  \label{vpart}
  \cM_\textup{linear}(\bv,
\delta_i) = 
\begin{cases}
  \pt\,, & \textup{$\bv$ corresponds to a partition $\lambda$} \,, \\
 \varnothing\,, & \textup{otherwise}\,, 
\end{cases}
\end{equation}
where the first case means that 
$$
\bv_j = \textup{\# of squares in $\lambda$ of content $j-i$} \,, 
$$
with
$$
\textup{content}(\square) = \textup{column}(\square) - 
\textup{row}(\square) \,. 
$$
Indeed, the nonempty moduli spaces in \eqref{vpart} form 
a basis of a level one Fock module for $\fgl_\infty$, also known 
as a fundamental representation of this Lie algebra. Those 
are labelled by an integer $i$ and this is the index $i$ in the 
formulas above. 

\subsubsection{}

Let $F$ be a component of the fixed locus \eqref{XA1}. It 
corresponds to a homomorphism 
$$
\phi_F: \bA \to G
$$
which makes all spaces $V_i$ and the $\Hom$-spaces  
between them $\bA$-graded. In particular, the fixed locus $F$ 
itself parameterizes $\bA$-invariant 
quiver maps, modulo the action of the 
centralizer $G^\bA\subset G$. 

We choose a generic $1$-parameter subgroup in $\bA$ to 
partition all nonzero weights into attracting and repelling. 
In particular, the polarization $T^{1/2}$ decomposes 
\begin{equation}
T^{1/2} \Big|_{F} = \left(T^{1/2}\right)_\textup{attracting} \oplus 
\left(T^{1/2}\right)_\textup{$\bA$-fixed} \oplus
\left(T^{1/2}\right)_\textup{repelling}\label{TdecA} 
\end{equation}
according to the $\bA$ weights. We define 
\begin{equation}
\fb_F = \sum_{w\in W_G/W_{G^\bA}} w \cdot \Ldi \left(
\left(T^{1/2}\right)_\textup{repelling} \oplus 
\hbar \left(T^{1/2}\right)_\textup{attracting} \right)\label{fbF} \,, 
\end{equation}
where the Weyl group acts by permuting the Chern roots of the 
bundles. Since the decomposition \eqref{TdecA} is 
$G^\bA$-equivariant, the group the Weyl group $W_{G^\bA}$
of $G^\bA$ 
acts trivially and the summation in \eqref{fbF} is over the 
cosets of $W_{G^\bA}$. 

Let $\cL$ be line bundle of the form 
\begin{equation}
  \label{cLdia}
  \cL_\diamond = \bigotimes \left(\det V_i \right)^{\varepsilon_i} \,, 
\quad 0 < \varepsilon_i \ll 1 \,. 
\end{equation}
The following proposition may be seen as an instance of 
an abelianization formula for stable envelopes, see e.g.\ \cites{Sh,
S1,ese}. Closely related constructions also appear in \cites{HLMO,
HLS}.

\begin{Proposition}\label{pfbF} 
The functions $\fb_F$ for all components $F$ of the fixed locus 
\eqref{XA1} form a $\Q(\bT)$-basis of the space of functions 
$\fb_\alpha$ for cyclic quiver varieties for polarization \eqref{Tdia} 
and slope \eqref{cLdia}\,. 
\end{Proposition}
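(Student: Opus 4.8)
The plan is to reduce Proposition \ref{pfbF} to two assertions: (i) the right-hand side of \eqref{fbF} is, up to a unit, the restriction to $\star$ of the stable envelope $\Stab(\cO_F)$ of the fixed-point class $\cO_F\in K_{\bT}(X^\bA)$, pushed onto $\fR$ as in Definition \ref{d1}; and (ii) the classes $\{\cO_F\}$ form a $\Q(\bT)$-basis of $K_\bT(X)\otimes\Q(\bT)$, so that after multiplication by the triangular change-of-basis matrix taking $\{\cO_F\}$ to any basis of $K_\bT(X)$ the functions $\fb_F$ span exactly the same space as all the $\fb_\alpha$. Assertion (ii) is immediate from the localization theorem in equivariant K-theory: the fixed-point classes always form a basis after inverting the relevant equivariant characters, and the fixed locus \eqref{XA1} has been described explicitly in terms of the fixed-point quiver \eqref{QbA}, each component contributing a point or a product of points by \eqref{vpart}. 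So the heart of the matter is assertion (i), which is an abelianization statement in the spirit of \cites{Sh,S1,ese}.

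\textbf{Step 1: abelianization of the stable envelope.} First I would invoke the factorization of the stable envelope for the full torus $\bA$ through the fixed-point quiver. Because $X^\bA$ is itself a Nakajima variety for the disconnected quiver \eqref{QbA}, and the attracting/repelling decomposition \eqref{TdecA} of the polarization $T^{1/2}_\diamond$ with respect to a generic cocharacter of $\bA$ is exactly the data entering the normalization of stable envelopes, the restriction $\Stab(\cO_F)\big|_{\text{fixed pt}}$ is governed by the half-index $\Ld_-$ of the repelling bundle and by a twist by the slope. Concretely, the general formula for the diagonal restriction of a stable envelope to a fixed point of a torus acting on a Nakajima variety is a Weyl-symmetrization over $W_G/W_{G^\bA}$ of $\Ld_-$ applied to $\big(T^{1/2}\big)^\vee_{\text{repelling}}$ twisted by $\hbar$ on the attracting part, and this is what \eqref{fbF} records after converting $\Ld_-(\,\cdot\,)^\vee$ to $\Ldi$ via \eqref{Ldi}–\eqref{Tdia}. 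The line bundle $\cL_\diamond$ in \eqref{cLdia} is the same ample slope appearing in \eqref{slope_choice} and \eqref{LdHom}, so the alcove is fixed consistently.

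\textbf{Step 2: matching with $\fb_\alpha$.} Next I would combine Step 1 with Definition \ref{d1} and Proposition \ref{p4}: $\bs_\alpha\big|_0 = \bdiota^*\Stab(\alpha)\big|_\star$, and $\fb_\alpha=\Delta_\hbar^{-1}\bs_\alpha$. Applying this with $\alpha=\cO_F$ and using the fact — following \cite{pcmi} — that the dual of a stable envelope is again a stable envelope with opposite slope and polarization, one rewrites $\Stab(\cO_F)\big|_\star$ as the $W_G/W_{G^\bA}$-symmetrization produced in Step 1, and checks that the $\Delta_\hbar$ in the denominator is absorbed by passing from $T^{1/2}_{\cM(\bv,\bw+\bv)}$ to $T^{1/2}_{\cM(\bv,\bw)}$ exactly as in \eqref{LL}. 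This yields $\fb_{\cO_F}=\fb_F$ up to a nonzero scalar in $\Q(\bT)$, which is harmless for the spanning statement. Finally, triangularity of $\alpha\mapsto\fb_\alpha$ with respect to the attracting order — the same triangularity underlying \eqref{bsres} and \eqref{RStab} — guarantees that the matrix expressing $\{\fb_F\}$ in terms of $\{\fb_\alpha\}$ for a basis $\{\alpha\}$ of $K_\bT(X)$ is invertible over $\Q(\bT)$, giving the claimed basis.

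\textbf{Main obstacle.} The delicate point is Step 1: justifying that the restriction of the stable envelope on $X$ to the deepest fixed point equals the naive product over abelian building blocks $\cM(\delta_i,\delta_i)$. This requires the compatibility of stable envelopes with the iterated fixed-locus structure \eqref{XA1}–\eqref{AiAl} and the fact that each $\cM(\delta_i,\delta_i)\cong\C^{2\mathbf g}$ has trivial (equivariant-formal, rank-one) K-theory, so that on each block the envelope is determined entirely by its diagonal restriction and the $\Ld_-$ normalization. One must also be careful that the symmetrization group is $W_G/W_{G^\bA}$ and not all of $W_G$, which is exactly why the $G^\bA$-equivariance of \eqref{TdecA} is emphasized before \eqref{fbF}; and that the slope twist by $\cL_\diamond$ does not interfere with the edge-multiplicity automorphisms recorded in \eqref{Ldiy}. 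Once these bookkeeping issues are settled, the rest is the routine conversion between $\Ld_-(\,\cdot\,)^\vee$ and $\Ldi$ already set up in \eqref{Ldi}–\eqref{Tdia} together with the linear-algebra argument of Step 2.
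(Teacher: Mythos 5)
There is a genuine gap, and it sits exactly where you flag your ``main obstacle.'' Your Step 1 rests on a purported ``general formula for the diagonal restriction of a stable envelope to a fixed point'' as a Weyl-symmetrization over $W_G/W_{G^\bA}$ of $\Ld_-$ of the repelling part (with $\hbar$ on the attracting part). No such general formula exists: in K-theory the restrictions of stable envelopes depend on the slope, and the naive symmetrized product is precisely what \emph{has to be proven} to be a stable envelope for the particular slope \eqref{cLdia} — the paper itself remarks that at fractional slopes the answer is genuinely different, so any argument that never uses the smallness of $\varepsilon_i$ cannot be correct. Moreover, the relevant object is not the $\bA$-stable envelope on $X$ alone: by Definition \ref{d1}, $\fb_\alpha$ is built from the $\bU$-stable envelope on the \emph{larger} variety $\cM(\bv,\bw+\bv)$, and its value at $\star$ is an \emph{off-diagonal} restriction (from the component $X$ to $\cM(\bv,\bv)$), which is exactly the part of a stable envelope not pinned down by any normalization. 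Your proposed justifications (triviality of the K-theory of the blocks $\cM(\delta_i,\delta_i)\cong\C^{2\mathbf{g}}$, compatibility with iterated fixed loci) do not touch this issue.

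The paper's route is the opposite of yours: it takes the universal formula \eqref{fbF} as a candidate and verifies that it satisfies the defining axioms, namely (a) $\Delta_\hbar\,\fb_F$ is supported on the full $\bA$-attracting set of $F$ in $\fX$ (this is where the parabolic induction / ``abelianization'' discussion of $\rho_\pm$ enters, and it is soft), and (b) — the quantitative heart — the $\bU$-weight bound \eqref{boundU}, reduced to \eqref{boundU2}, on the enlarged variety with framing $W=W'+uW''$. That bound is checked in the two limits $u\to 0$ and $u\to\infty$: the exponent $e_0$ in \eqref{e0} is strictly negative because the fixed-point quiver \eqref{QbA} is a union of $A_\infty$ quivers, whose Cartan form is negative definite, and the exponent $e_\infty$ in \eqref{einf} vanishes by the dimension formula \eqref{dimNak} together with \eqref{vpart}. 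This is the content that ties the formula to the slope \eqref{slope_choice}/\eqref{cLdia} and makes $\fb_F$ one of the $\fb_\alpha$; it is entirely absent from your proposal. Your assertion (ii) (localization gives a $\Q(\bT)$-basis indexed by the $F$'s) and your bookkeeping in Step 2 via Proposition \ref{p4} and \eqref{LL} are fine, but without the degree estimate in the auxiliary $\bU$-direction the argument assumes the proposition rather than proving it.
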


\noindent 
Note that if the rest of the terms and the cycle of integration in 
\eqref{IMB} are symmetric then there is no need to symmetrize 
under the integral sign. 

\subsubsection{}\label{s_Hilb} 

For example, let
$$
X = \Hilb(\C^2,n)
$$
be the Hilbert scheme of $n$ points in the plane
$\C^2$, which corresponds to 
$$
\ell=1\,, \quad \bw=1\,, \quad \bv=n \,. 
$$
The tori 
$$
\bT = \left\{ 
  \begin{pmatrix}
    t_1 \\ & t_2 
  \end{pmatrix} \right\} 
\supset 
\bA = \left\{ 
  \begin{pmatrix}
    t_1 \\ & t_1^{-1}  
  \end{pmatrix} \right\} 
$$
acts naturally on $\C^2$ and $\Hilb(\C^2,n)$ and 
$$
\hbar = \frac{1}{t_1 t_2} \,. 
$$
The fixed points of $\bT$ and $\bA$ 
are indexed by partitions $\lambda$ of $n$ and 
$$
V\big|_{\lambda} = \sum_{\square=(i,j) \in \lambda} 
t_1^{1-j} t_2^{1-i} 
$$
as a $\bT$-module. In particular, the $\bA$-weights in $V$ are 
given by minus contents of the boxes.  As a polarization, we 
may take 
\begin{align*}
  T^{1/2} & = V + (t_1-1) \Hom(V,V) \\
& = \sum x_i + (t_1-1) \sum_{i,j} x_i/x_j 
\end{align*}
where $\{x_i\}$ are the Chern roots of $V$. A fixed point is 
specified by the assignment of $x_i$ to the boxes of 
$\lambda$, up to permutation. 

If we take $t_1$ to be a \emph{repelling} weight for $\bA$ then 
$$
T^{1/2}_{\gtrless} = \sum_{c(i)\gtrless 0} x_i + 
t_1 \sum_{c(i)\gtrless c(j)+1 } x_i/x_j - \sum_{c(i)\gtrless c(j) } x_i/x_j
$$
where 
$$
T^{1/2}_> = T^{1/2}_\textup{attracting} \,, \quad 
T^{1/2}_< = T^{1/2}_\textup{repelling}\,,
$$
and $c(i)$ is the content of the box in $\lambda$ assigned to $x_i$. 
Therefore, up to an $\hbar$ multiple, we have 
$$
\fb_\lambda = \textup{symmetrization of } \frac{\Pi_1 \Pi_2} {\Pi_3} 
$$
where 
$$
\Pi_1 = \prod_{c(i)<0} (1 - x_i) \prod_{c(i)>0} (t_1 t_2 - x_i) 
$$
and 
\begin{align*}
\Pi_2  &= \prod_{c(i)<c(j)+1}(x_j - t_1 x_i) \prod_{c(i)>c(j)+1}(t_2
        x_j - x_i) \\
\Pi_3  &= \prod_{c(i)<c(j)}(x_j -  x_i) \prod_{c(i)>c(j)}(t_1 t_2
        x_j - x_i) \,. 
\end{align*}
These are formulas for K-theoretic stable envelopes for 
$\Hilb(\C^2,n)$ with the polarization and slope 
as in Proposition \ref{pfbF}. They are a direct 
K-theoretic generalization of the formulas from 
\cites{Sh,S1}. 

Note that in all cases treated by the formula 
\eqref{fbF} the slope is near an integral line bundle. Much 
more interesting functions appear at fractional slopes, but 
they seem to be not required in the context of Bethe 
Ansatz.

\subsubsection{}

The proof of Proposition \ref{pfbF}  takes several steps. As a first 
step, we clarify the geometric meaning of the formula 
\eqref{fbF}.

We separate the numerator and denominator in \eqref{fbF} 
by writing 
$$
\left(T^{1/2}\right)_\textup{repelling} \oplus 
\hbar \left(T^{1/2}\right)_\textup{attracting} 
= \rho_+  - \rho_-
$$
as a difference of two $\bA$-modules. Then 
$\Ldi \rho_+$ is the numerator in \eqref{fbF}, while 
$\Ldi \rho_-$ is the denominator. We note that 
\begin{equation}
\Ldi \rho_+ = \pm \cO_{\Attr(F)} \otimes 
\dots \in K_{\bT\times P} (T^*\Rep) \label{Ldirho}
\end{equation}
where dots stand for a character. Here 
$$
\Attr(F) \subset T^*\Rep
$$
is the $\bA$-attracting manifold and $P\subset G$ is the 
the parabolic subgroup with 
$$
\fp = \Lie P = \fg_\textup{attracting}  \,.
$$
It acts on the character in \eqref{Ldirho} via the homomorphism 
\begin{equation}
  \label{NPG}
  1 \to \textup{unipotent radical $N$} \to P \to G^\bA \to 1
\end{equation}
to its Levi subgroup $G^\bA=P^\bA$. 

\subsubsection{}

Formula \eqref{Ldirho} illustrates two general facts. 
First, this is an instance of stable 
envelopes for abelian quotients and abelian stacks. 
In general, in the abelian case, stable envelopes are structure
sheaves of the attracting locus, up to line bundles.

The second general principle apparent in \eqref{Ldirho} 
 is summarized in the following, in which $Y$ is an abstract 
variety or stack for which stable envelopes are defined. 

\begin{Lemma} Let $P$ in 
$$
\bA \subset P \subset \Aut(Y) 
$$
be an algebraic group such that the $\bA$-weights in $\fp$ are 
attracting. Stable envelopes define a map 
$$
K_{P} (Y^\bA) \to K_P(Y)
$$
where $P$ acts on $Y^\bA$ via the projection to $P^\bA$. 
\end{Lemma}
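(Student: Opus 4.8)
The plan is to check that every piece of geometry entering the construction and the characterization of stable envelopes survives the enlargement of the symmetry group from $\bA$ to $P$, once the hypothesis on $\fp$ is read correctly.

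First I would unpack the assumption. Saying that the $\bA$-weights in $\fp$ are attracting is the same as saying that the unipotent part $N\subset P$ (with $\Lie N=\fp$ modulo the zero weight space, as in \eqref{NPG}) contracts to the identity under conjugation by the one-parameter subgroup defining the chamber: for every $p\in P$ the limit $\bar p=\lim_{t\to 0}t\,p\,t^{-1}$ exists and lies in the Levi $P^\bA$, and $p\mapsto\bar p$ is precisely the projection $P\to P^\bA$ in the statement (it kills $N$ and restricts to the identity on $P^\bA=Z_P(\bA)$). Granting this, the single computation $\lim_{t\to 0}t\cdot(p\,x)=\lim_{t\to 0}(t\,p\,t^{-1})(t\cdot x)=\bar p\cdot\bigl(\lim_{t\to 0}t\cdot x\bigr)$ does two things at once: it shows that each attracting set $\Attr(F)$, hence its closure, is a $P$-invariant subvariety of $Y$, and that the contraction map $\pi_F\colon\Attr(F)\to F\subset Y^\bA$ is $P$-equivariant when $P$ acts on $Y^\bA$ through $P\to P^\bA$. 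Thus the whole correspondence $F\xleftarrow{\pi_F}\Attr(F)\hookrightarrow Y$ carrying the stable envelope is $P$-equivariant; in the purely abelian case, where (as in \eqref{Ldirho}) the stable envelope is nothing but $\beta\mapsto\iota_{F,*}\pi_F^*\beta$ up to the polarization twist, this already produces the desired map $K_P(Y^\bA)\to K_P(Y)$.

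For a general $Y$ I would then handle the off-diagonal corrections. The defining properties of $\Stab$ --- support on the full attracting set, the Thom-class normalization on the diagonal fixed components, and the slope ``window'' bound on the remaining fixed-point restrictions --- refer only to the ($P$-invariant) attracting strata, to the polarization $T^{1/2}$, and to the slope $\cL$. Choosing $T^{1/2}$ and $\cL$ to be $P$-equivariant --- which is automatic in the situation of \eqref{Ldirho}, where everything is inherited from $T^*\Rep$ with its $\bT\times P$-action --- one runs the usual order-by-order construction of the stable envelope (recalled in \cite{pcmi}) $P$-equivariantly: each successive correction term is supported on a $P$-invariant attracting stratum and is uniquely pinned down by a fixed-point restriction condition involving only $P$-equivariant data, hence is itself $P$-equivariant. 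Forgetting down along $K_P(-)\to K_\bA(-)$ and using uniqueness of the $\bA$-stable envelope then identifies the class produced this way as a lift of $\Stab$. Alternatively one can simply invoke the ``magic window'' realization of stable envelopes as kernels between derived categories of the relevant quotient stacks, due to Halpern-Leistner and collaborators \cites{HL,HLMO,HLS}, which is available over $[Y/P]$ directly and descends to $K$-theory.

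The one genuine subtlety, and the point I expect to be the main obstacle to a clean write-up, is that $P$ does \emph{not} normalize $\bA$, so there is no $P$-action on $K_\bA(Y)$ and the lemma cannot be proved by saying ``$\Stab$ commutes with such an action''. What is really true is that the stable envelope lives on the quotient stack $[Y/P]$ from the start, with $P$ seen on $Y^\bA$ only through its Levi quotient $P^\bA$; the attracting hypothesis on $\fp$ is exactly the condition that makes $N$ invisible on the fixed locus while still acting compatibly on the total space and on all the attracting strata. Making this precise --- i.e.\ verifying the $P$-equivariance at each step of the construction, rather than trying to transport an $\bA$-equivariant statement --- is where the actual work lies.
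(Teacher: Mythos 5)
Your proposal is correct and follows essentially the same route as the paper: the attracting hypothesis on $\fp$ makes the attracting sets $P$-invariant and the projections $\Attr(F)\to F$ equivariant for the $P^\bA$-action, the bottom/diagonal piece is the $P$-equivariant push-pull up to a line bundle, and the remaining components are handled inductively via the uniqueness characterization of stable envelopes. The only cosmetic difference is that you pin down the inductive corrections through the support-plus-window conditions and restriction to $K_\bA$, whereas the paper invokes orthogonality to the lower stable envelopes in the sense of \cite{HL}; your explicit limit computation $\lim_{t\to 0}t\cdot(px)=\bar p\cdot\lim_{t\to 0}(t\cdot x)$ is exactly what the paper's terse ``preserves attracting manifolds'' is standing in for.
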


\begin{proof}
Our assumption on $P$ implies that it preserves attracting 
manifolds. We then 
argue inductively using the attracting order on the components
$F_i$ of $Y^\bA$. For the very bottom component, the stable 
envelope is the push-pull in the $P$-equivariant diagram 
\begin{equation}
  \label{Fbottom}
  \xymatrix{ & 
\Attr\left(F_\textup{bottom}\right)
\ar[dl]_{\textup{projection\quad}}
\ar[dr]^{\textup{\quad inclusion}}
\\ 
F_\textup{bottom} && Y \,,}
\end{equation}
up to a line bundle pulled back from $F_\textup{bottom}$. 
For all other components $F$, stable envelopes are uniquely 
determined by having the same structure \eqref{Fbottom} 
near $F$ and being orthogonal to all lower stable envelopes
in the sense of \cite{HL}, whence the conclusion. 

\end{proof}

\subsubsection{}

By construction 
\begin{equation}
\rho_- = \fg/\fp \, \oplus \, \hbar \, \fn \,, 
\label{rho-}
\end{equation}
where $\fn = \Lie N$ is the nilradical of $\fp$. The second 
term here has the following interpretation. 

Since the moment map is a $\bA$-equivariant map, we have
$$
\mu: \Attr(F) \to \fg^\vee_\textup{attracting} = \fn^\perp \,. 
$$
Therefore there is no need to impose the moment map in $\fn^\vee$. 
Equivalently, if we planning to multiply by the Koszul complex 
$\Delta_\hbar$ of 
$\hbar^{-1} \fg^\vee$ to get a class supported on $\fX$, we may 
divide by 
$$
\textup{Koszul complex of $\hbar^{-1} \fn^\vee$} = 
\pm \Ldi \left(\hbar \, \fn \right) \otimes \dots \,,
$$
where dots stand for an unspecified character, as before. 

\subsubsection{}

The meaning of the first term in \eqref{rho-} is the following. 
Given a $P$-equivariant sheaf on a $G$-variety $Y$, we can induce it to a 
$G$-equivariant by first, making a $G$-equivariant sheaf on 
$G/P \times Y$ and then pushing it forward to $Y$. 

In the case at hand, up to a line bundle, 
the denominators $\Ldi \left(\fg/\fp\right)$ and 
the summation over $W_G/W_{G^\bA}$ in \eqref{fbF} quite 
precisely come from an equivariant localization on $G/P$.  
We conclude the following 

\begin{Proposition}
The symmetric polynomial $\Delta_\hbar \, \fb_F$ 
represents a class in
$K_\bT(\fR)$ supported on the full $\bA$-attracting set of 
$F$ in $\fX$ 
\end{Proposition}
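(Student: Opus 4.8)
The plan is to assemble the statement from the three structural observations that precede it in this subsection, since each of the terms appearing in $\fb_F$ has already been given a geometric meaning. Concretely, recall from \eqref{Ldirho} that $\Ldi\rho_+$ represents, up to a character, the structure sheaf $\cO_{\Attr(F)}$ of the $\bA$-attracting manifold inside $T^*\Rep$, viewed as a $\bT\times P$-equivariant class where $P$ is the parabolic with $\fp = \fg_\textup{attracting}$. So the numerator of $\fb_F$ is geometric; what remains is to interpret the denominator $\Ldi\rho_- = \Ldi(\fg/\fp)\otimes\Ldi(\hbar\,\fn)$ and the Weyl-group sum, and then to check that after multiplying by $\Delta_\hbar$ one lands on a class supported on the full $\bA$-attracting set of $F$ in $\fX$.

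First I would handle the $\hbar\,\fn$ factor using the argument already given: since the moment map is $\bA$-equivariant, $\mu$ restricted to $\Attr(F)$ lands in $\fn^\perp = \fg^\vee_\textup{attracting}$, so the moment map equations in the $\fn^\vee$-directions are automatically satisfied on the attracting manifold. Hence multiplying the class on $\Attr(F)$ by the full Koszul complex $\Delta_\hbar$ of $\hbar^{-1}\fg^\vee$ is the same — up to an irrelevant character and the sign conventions of \eqref{Ldi} — as first dividing by the Koszul complex of $\hbar^{-1}\fn^\vee$, i.e. by $\Ldi(\hbar\,\fn)$, and then multiplying by $\Delta_\hbar$. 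This exactly accounts for the $\hbar\,\fn$ part of $\rho_-$ and explains why $\Delta_\hbar\,\fb_F$ is naturally a class with moment-map equations imposed.

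Next I would deal with the $\fg/\fp$ factor together with the sum over $W_G/W_{G^\bA}$. The point, already flagged before the statement, is that a $P$-equivariant sheaf on the $G$-variety $Y = T^*\Rep$ (here the structure sheaf of $\Attr(F)$, twisted) is induced to a $G$-equivariant sheaf by forming the associated sheaf on $G/P \times Y$ and pushing forward to $Y$; and equivariant localization on $G/P$ produces precisely a sum over $W_G/W_{G^\bA}$ with denominators $\Ldi(\fg/\fp)$ at each fixed point. So $\fb_F$, multiplied by $\Delta_\hbar$ and cleared of the character ambiguity, is the $G$-equivariant (hence descends to $\fR$) class induced from the $P$-equivariant extension of $\iota_{\Attr_\bA(F)\cap\fX,*}$. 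Tracking the supports through this construction: the $P$-equivariant class is supported on $\Attr(F)\cap\mu^{-1}(0)$ inside $\fX$, the induction along $G/P$ spreads this to the $G$-orbit, and the result is supported on $G\cdot(\Attr(F)\cap\cZ) = $ the full $\bA$-attracting set of $F$ in $\fX$ (the full attracting set being the union of the $\bA$-attracting manifolds of all components lying above $F$ in the attracting order, which is exactly what one picks up by closing up the $P$-orbit). Compatibility of the slope \eqref{cLdia} and polarization \eqref{Tdia} with this construction is by the choices made, matching the conventions of \cite{pcmi}.

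The main obstacle I anticipate is the bookkeeping of the unspecified characters and signs: at several points the identifications hold only ``up to a line bundle'' or ``up to a character of $P$'' (as in \eqref{Ldirho} and \eqref{rho-}), and one must check that all of these line bundles are pulled back from the Levi $G^\bA$ (so that they survive the induction along $G/P$ and do not spoil $G$-equivariance), and that the half-integral $\det$-twists in \eqref{Ldi} reassemble into the polarization $T^{1/2}_\diamond$ dictating the stable-envelope normalization. Verifying that the resulting class on $\fR$ literally coincides with the $\bs_F$ of Definition \ref{d1} (rather than just having the same support) is deferred — here the claim is only the support statement — but the same character analysis is what will be needed there. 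A secondary subtlety is making precise ``full $\bA$-attracting set of $F$ in $\fX$'': one should confirm that $\supp$ of the induced class is the scheme-theoretic image $G\cdot\overline{\Attr(F)} \cap \cZ(\bv,\bw)$ modulo $G$, using that $P$ preserves attracting manifolds (the Lemma above) so that the $G$-sweep is controlled by the attracting order.
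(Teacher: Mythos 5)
Your argument is essentially the paper's own: there the Proposition is stated precisely as the conclusion of the preceding observations --- $\Ldi\rho_+$ is $\cO_{\Attr(F)}$ up to a character, the $\hbar\,\fn$ part of $\rho_-$ is absorbed because $\mu(\Attr(F))\subset\fn^\perp$ so dividing by its Koszul complex is exactly compatible with later multiplying by $\Delta_\hbar$, and the $\Ldi(\fg/\fp)$ denominators together with the $W_G/W_{G^\bA}$ sum are the localization form of induction along $G/P$ --- so the support statement follows just as you outline, with the character/normalization bookkeeping likewise deferred in the paper to the comparison with Definition \ref{d1}. The only cosmetic point is your parenthetical description of the full attracting set (the components reached through chains are those \emph{below} $F$ in the attracting order, in the conventions of \cite{pcmi}), which does not affect the argument.
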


\subsubsection{}

\begin{proof}[Proof of Proposition \ref{pfbF}]

Note that the formula \eqref{fbF} is universal for all dimension 
vectors. By the logic of our Definition \ref{d1}, to prove \eqref{fbF} for 
a specific $X=\cM(\bv,\bw)$ we need to check something 
for a larger Nakajima variety $\cM(\bv,\bw+\bv)$. Namely, together
with $X$, the 
fixed locus $F$ embeds in $\cM(\bv,\bw+\bv)$ and 
we need to bound $\bU$-weights in the corresponding function 
$\fb_{F, \cM(\bv,\bw+\bv)}$. 

{}From this angle, 
there is nothing special about the framing dimension being 
increased by exactly $\bv$, and we can more generally assume 
that an action of $\bU\cong \Ct$ is defined 
by a decomposition of the framing spaces 
$$
W = W' + u \, W'' \,, 
$$
in which $u$ is the defining weight of $\bU$ and $W',W''$ are 
trivial $\bU$-modules. We have 
$$
X^\bU = \bigcup_{\bv'+\bv''= \bv} \cM(\bv',\bw') \times 
\cM(\bv'',\bw'')
$$
and we choose the attracting directions so that 
components with larger $\bv''$ are attracted to those with 
smaller $\bv'''$. For the bundle $\cL_\diamond$ from \eqref{cLdia}
we have 
$$
\textup{weight} \, \cL_\diamond \Big|_{X^\bU} = 
\varepsilon \cdot \bv'' \,. 
$$
In the context of our Definition \ref{d1},  
\begin{itemize}
\item[---] we assume that $F$ lies in the $\bw''=0$ component of 
the fixed locus $X^\bU$, 
\item[---] also assume that the attracting direction for $\bA$ agree
  with those for $\bU\subset \bA$, 
\item[---] and we need to prove that 
  \begin{equation}
    \label{boundU}
  \left. u^{- \varepsilon \cdot \bv'' } \frac{\fb_F}{\Ldi \, T^{1/2}}
    \, \right|_{X^\bU} = O(1) \,, 
\quad u^{\pm 1} \to \infty \,, 
  \end{equation}
for $0<\varepsilon_i \ll 1$. 
\end{itemize}

In \eqref{fbF}, we select the attracting and repelling 
directions in the decomposition \eqref{TdecA}. Since 
in \eqref{boundU} this is compared with the whole polarization, 
the bound \eqref{boundU} follows from 
  \begin{equation}
    \label{boundU2}
  \left. u^{- \varepsilon \cdot \bv'' } \frac{1}{\Ldi \, T^{1/2}_\textup{$\bA$-fixed}}
    \, \right|_{X^\bU} = O(1) \,, 
\quad u^{\pm 1} \to \infty \,, 
  \end{equation}
which will now be established. 

In $\fb_F$, the Chern classes $x_{ij}$ of the universal bundles 
are partitioned into various groups according to their
$\bA$-grading. The sizes of these groups are 
given by the dimension 
vector $\vt$ of the quiver \eqref{QbA}. Restricted to 
$X^\bU$, this dimension vector further splits 
$$
\vt = \vt' + \vt''
$$
into components of weight $0$ or $1$ with respect to $\bU$. 

For computations of degree in $u$, it is natural to use the 
quadratic form associated to the quiver \eqref{QbA}. In general, 
for any quiver $Q$ with dimension vector $\bv$, one defines 
\begin{equation}
(\bv,\bv)_Q = \sum_{i \to j} \bv_i \bv_j \label{formQ}
\end{equation}
where $i\to j$ means that $i$ and $j$ are connected by an edge of
$Q$. Together with the corresponding dot product 
$$
\bv \cdot_Q \bv' = \sum_{i\in \textup{vertices}(Q)} \bv_i \bv'_i 
$$
the form \eqref{formQ} 
enters the dimension formula for Nakajima varieties 
\begin{equation}
  \label{dimNak}
  \tfrac12 \dim \cM_Q(\bv,\bw) = (\bv,\bv)_Q + \bv \cdot_Q (\bw - \bv) 
\,. 
\end{equation}
To prove \eqref{boundU2}, we consider the cases $u\to 0$ and 
$u\to\infty$ limits separately. In the $u\to 0$ limit, we have
  \begin{equation}
    \label{boundU0}
 \left. \frac{1}{\Ldi \, T^{1/2}_\textup{$\bA$-fixed}}
    \, \right|_{X^\bU} = O(u^{e_0}) \,, 
\quad u\to 0 \,, 
  \end{equation}
where 
\begin{equation}
  \label{e0}
  e_0 = - \vt'' \cdot_{Q^\bA} \vt'' + (\vt'',\vt'')_{Q^\bA} 
\end{equation}
because $\bw''=0$ by construction.  Since $Q^\bA$ is a union 
of quivers of type $A_\infty$ the quadratic form in \eqref{e0}, 
which is proportional to the Cartan-Killing form for the corresponding 
Lie algebra, is 
negatively defined. Therefore 
$$
e_0 < 0 \quad \textup{for}\quad \bv''\ne 0 
$$
and the $u\to 0$ case of \eqref{boundU2} is established. 

In the opposite limit we have 
  \begin{equation}
    \label{boundUinf}
 \left. \frac{1}{\Ldi \, T^{1/2}_\textup{$\bA$-fixed}}
    \, \right|_{X^\bU} = O(u^{e_\infty}) \,, 
\quad u\to \infty \,, 
  \end{equation}
where 
\begin{equation}
  \label{einf}
  e_\infty = \tfrac12 \dim \cM_{Q^\bA}(\vt,\bw) - 
\tfrac12 \dim \cM_{Q^\bA}(\vt',\bw) \,. 
\end{equation}
{}From \eqref{vpart}, we conclude 
$$
e_\infty = 0 
$$
and the proof of \eqref{boundU2} is complete. 
\end{proof}

 \renewcommand{\theequation}{A.\arabic{equation}}
  \setcounter{equation}{0}  
 
\subsection*{Appendix: Bethe equations}

For completeness, we recall the Bethe equations first derived in the
current context by Nekrasov and Shatashvili \cites{NS1,NS2}.  
Here we derive them formally 
as equations for the critical points of 
the integrand in \eqref{IMB}. See e.g.\ \cite{PSZ}  for a discussion
which does not explicitly 
involve integral representation. 

Let 
\begin{equation}
TX = T \left( T^* \Rep(\bv,\bw)\right) - \sum_i 
(1+\hbar^{-1}) \End(V_i) \label{TX}
\end{equation}
be the tangent bundle of $X$ viewed as an 
element of $K_{\bT\times G} (\fR)$. This is a Laurent 
polynomial in $x_{i,k}$ and the characters of $\bT$. 
The negative terms in it reflect the moment map equations and the 
quotient by $G$. 

Let the transformation $\aroof$ be defined by 
$$
\aroof\left( \sum n_i \chi_i\right) = \prod 
\left(\chi_i^{1/2} - \chi_i^{-1/2}\right)^{\textstyle n_i} \,, \quad
n_i \in \Z \,,
$$
where $\chi_i$ are weights of $\bT \times G$. This is a 
homomorphism from the group algebra of the weight lattice to 
rational functions on a double cover of the maximal torus. 

The following is a restatement of a result of Nekrasov and 
Shatashvili \cites{NS1,NS2}. 

\begin{Proposition}\label{pBethe} 
The critical points in the $q\to 1$ asymptotics of the integral 
\eqref{IMB} satisfy the following Bethe equations 
\begin{equation}
\aroof \left( x_{i,k} \frac{\partial}{\partial x_{i,k}} TX \right) =
z_i  \label{Bethe_eq}
\end{equation}
for all $i\in I$ and $k=1,\dots, \bv_i$. 
\end{Proposition}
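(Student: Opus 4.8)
The plan is to compute the critical‑point equations $x_{i,k}\,\partial_{x_{i,k}}\cW=0$ of the Yang--Yang function \eqref{cW} directly and to match them with \eqref{Bethe_eq}. Only the exponentially large part of the integrand of \eqref{IMB} — that is, $\be(x,z)$ and the product of ratios of $\phi$'s — contributes to the $q\to1$ saddle point, so the rational factors $\fb_\alpha$ and $\mathbf{g}_\beta$ are irrelevant here. First I would record the $q\to1$ asymptotics of the quantum dilogarithm: from $\ln\phi(y)=\sum_{n\ge0}\ln(1-q^{n}y)=-\sum_{m\ge1}\frac{y^{m}}{m(1-q^{m})}$ and $1-q^{m}=-m\ln q+O\bigl((\ln q)^{2}\bigr)$ one gets $\lim_{q\to1}(\ln q)\ln\phi(y)=\sum_{m\ge1}y^{m}/m^{2}=\mathrm{Li}_{2}(y)$, the Euler dilogarithm. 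Together with $\be(x,z)$ this turns \eqref{cW} into
\[
\cW=\sum_{i,k}(\ln z_{i})(\ln x_{i,k})+\sum_{w}\bigl(\mathrm{Li}_{2}(w)-\mathrm{Li}_{2}(\hbar w)\bigr)+\ell(\ln x),
\]
where $w$ runs, with multiplicities and signs, over the weights of $\bT\times G$ on a polarization $T^{1/2}$ of $\fX$ — equivalently, over the weights on the prequotient together with the roots of $G$, as in the discussion preceding \eqref{cW} — the $q$-shifts of $b_{i}$ becoming trivial as $q\to1$ while the $\hbar$-shifts of $c_{i}$ survive because $\{b_{i},c_{i}\}=\{qt^{\nu_{i}},\hbar t^{\nu_{i}}\}$; the summand $\ell(\ln x)$, linear in the $\ln x_{i,k}$, comes from the multiplicative shift of $z$ mentioned in the footnote to \eqref{IMB}.

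Next I would differentiate. Using $y\,\tfrac{d}{dy}\mathrm{Li}_{2}(y)=-\ln(1-y)$, and the fact that $x_{i,k}\,\partial_{x_{i,k}}$ multiplies a Laurent monomial $w$ by $\deg_{x_{i,k}}w$, the equation $x_{i,k}\,\partial_{x_{i,k}}\cW=0$ becomes, after exponentiation,
\[
z_{i}=(\mathrm{const})\cdot\prod_{w}\Bigl(\frac{1-w}{1-\hbar w}\Bigr)^{\deg_{x_{i,k}}w}.
\]
Then I would recognize the right-hand side as $\aroof$ applied to $x_{i,k}\,\partial_{x_{i,k}}TX$. Writing $TX=T^{1/2}+\hbar^{-1}(T^{1/2})^{\vee}$, each weight $w$ of $T^{1/2}$ occurs in $TX$ paired with its symplectic partner $\hbar^{-1}w^{-1}$, so $x_{i,k}\,\partial_{x_{i,k}}TX=\sum_{w}(\deg_{x_{i,k}}w)(w-\hbar^{-1}w^{-1})$; a one-line calculation gives $\aroof(w)/\aroof(\hbar^{-1}w^{-1})=-\hbar^{1/2}(1-w)/(1-\hbar w)$, whence $\aroof\bigl(x_{i,k}\,\partial_{x_{i,k}}TX\bigr)=(-\hbar^{1/2})^{\deg_{x_{i,k}}\det T^{1/2}}\prod_{w}\bigl(\frac{1-w}{1-\hbar w}\bigr)^{\deg_{x_{i,k}}w}$. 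Since $\det T^{1/2}$ is $W_{G}$-invariant, the exponent $\deg_{x_{i,k}}\det T^{1/2}$ depends only on the vertex $i$, so this coincides with the previous display up to an overall constant, which I would absorb into the multiplicative $z$-shift. Finally, comparing with the $z\to0$ specialization of $\cW$, whose critical locus is the image of $\iota$ in \eqref{SpecK} — the spectrum of $K_{\bT}(X)$ — fixes every normalization and yields \eqref{Bethe_eq}.

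The step I expect to be the main obstacle is the careful bookkeeping of the dictionary between the $\phi$-ratios in \eqref{IMB} and the Laurent polynomial $TX$ of \eqref{TX}: deciding which weight carries the $q$-shift and which the $\hbar$-shift (so that the $q\to1$ limit retains exactly $\mathrm{Li}_{2}(w)-\mathrm{Li}_{2}(\hbar w)$), tracking how the roots of $G$ — that is, the subtracted $(1+\hbar^{-1})\End(V_{i})$ in \eqref{TX} — produce the inter-root factors $\prod_{m\ne k}$ familiar from Calogero--Ruijsenaars-type Bethe equations, and isolating the linear prefactor $\ell(\ln x)$. Each of these is routine once the conventions are fixed, but they must all be aligned for the clean form \eqref{Bethe_eq} to hold.
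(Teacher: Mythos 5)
Your proposal is correct and follows essentially the same route as the paper: the paper's Riemann-integral approximation $\frac{1}{\ln q}\int_1^\hbar \ln(1-s\chi)\,\frac{ds}{s}$ is exactly your dilogarithm limit $\mathrm{Li}_2(\chi)-\mathrm{Li}_2(\hbar\chi)$, and both arguments then differentiate, use the pairing $\chi\leftrightarrow\hbar^{-1}\chi^{-1}$ from the polarization to recognize $\aroof\bigl(x_{i,k}\partial_{x_{i,k}}TX\bigr)$, and absorb the resulting $(-\hbar^{1/2})^{\det T^{1/2}}$ factor into the multiplicative shift $z_\#=z\,(-\hbar^{1/2})^{-\det T^{1/2}}$. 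Your closing appeal to the $z\to 0$ specialization is unnecessary (the shift convention already fixes the constant), but this does not affect correctness.
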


The exact form of the right-hand side in \eqref{Bethe_eq} depends
on the shift of variable $z$, which was mentioned 
but not made explicit in the discussion of \eqref{IMB}. 
In \cite{pcmi} it is explained why it is natural to use 
$$
z_\# = z \, (-\hbar^{1/2})^{-\det T^{1/2}}
$$
in place of $z$ in \eqref{IMB}, see also \cite{afo}. It is directly 
related to the shift by the canonical theta-characteristic in 
\cite{MO}. With this shift, the equation \eqref{Bethe_eq} 
take the stated form. 

Note that 
\begin{equation}
\det T^{1/2} = \prod_{\chi \in T^{1/2}X} \chi \label{detT12}
\end{equation}
is a line bundle on $X$ and hence a cocharacter of the K\"ahler torus 
$\bZ$. It therefore makes sense to shift the variables 
$z$ by the value of this 
cocharacter at $-\hbar^{1/2}$. 
Concretely, the coordinates of \eqref{detT12} in the lattice of 
cocharacters are the exponents of $x_{i,k}$ in \eqref{detT12}. 
Note that these 
exponents do not depend on $k$.

\begin{proof}[Proof of Proposition \ref{pBethe} ]

Let $\Phi$ denote the term with $\phi$-functions in 
\eqref{IMB}. We recall from \cite{afo} that 
\begin{equation}
\Phi = \prod_{\chi\in T^{1/2}X} \frac{\phi(q \, \chi)}{\phi (\hbar
  \,\chi)}\label{phi_prod}
\end{equation}
where the product is over the weight $\chi$ in a polarization 
$T^{1/2}X$ of \eqref{TX}. By definition of a polarization, we
have 
\begin{equation}
  \label{eq:1}
TX = \sum_{\chi \in T^{1/2}X} \left(\chi + \frac{1}{\hbar \chi}\right)
\,. 
\end{equation}
Approximating a sum by a Riemann integral gives 
$$
\ln \frac{\phi(q \, \chi)}{\phi (\hbar \,\chi)}  
\sim \frac{1}{\ln q} \int_{1}^{\hbar} \ln (1-s \chi) \, \frac{ds}{s} 
\,, \quad q\to 1 \,. 
$$
Elementary manipulations give 
\begin{multline}
  x \frac{\partial}{\partial x} \int_{1}^{\hbar} \ln (1-s \chi) \,
  \frac{ds}{s} = - \frac{x \frac{\partial}{\partial x} \chi}{\chi} \, \ln
  \frac {1-\chi}{1-\hbar \chi} = \\ = 
 - \ln \aroof \left( x \frac{\partial}{\partial x}  
\left(\chi + \frac{1}{\hbar \chi}\right) \right) +  
 \ln(-\hbar^{1/2})
\, x \frac{\partial}{\partial x} \ln \chi 
\,. 
\label{form1} 
\end{multline}
Note that summed over $\chi$ 
the first term on the second line 
of \eqref{form1} gives $\ln  \aroof \left( x
  \frac{\partial}{\partial x}  TX \right)$. 

The other exponentially large term in \eqref{IMB} is 
$\be(x,z_\#)$, where $z_\#$ denotes the K\"ahler 
variables $z$ shifted by $(-\hbar^{1/2})^{-\det T^{1/2}}$, as above. 
By definition, this means 
\begin{equation}
  \label{form2}
  x_{i,k} \frac{\partial}{\partial x_{i,k}}
\ln \be(x,z_\#) = \frac{1} {\ln q} 
\left(\ln z_i - \ln(-\hbar^{1/2}) \sum_\chi 
x_{i,k} \frac{\partial}{\partial x_{i,k}} \ln \chi 
\right)\,. 
\end{equation}
Summing \eqref{form1} and \eqref{form2} gives
$$
\ln  \aroof \left( x_{i,k} 
  \frac{\partial}{\partial x_{i,k} }  TX \right) = \ln z_i 
$$
as equations for the critical point of the function 
$\cW$ in \eqref{cW}, 
as claimed. 

\end{proof}

\begin{bibdiv}
	\begin{biblist}

\bibitem{afo} 
M.~Aganagic, E.~Frenkel, and A. Okounkov, 
\emph{Quantum $q$-Langlands Correspondence}, 
\texttt{arXiv:1701.03146}. 

\bibitem{ese} 
M.~Aganagic and A. Okounkov, 
\emph{Elliptic stable envelopes},
\texttt{arXiv:1604.00423}

\bib{AtHirz}{article}{
   author={Atiyah, Michael},
   author={Hirzebruch, Friedrich},
   title={Spin-manifolds and group actions},
   conference={
      title={Essays on Topology and Related Topics (M\'emoires d\'edi\'es
      \`a Georges de Rham)},
   },
   book={
      publisher={Springer, New York},
   },
   date={1970},
   pages={18--28},
}

\bib{ChariPress}{book}{
   author={Chari, Vyjayanthi},
   author={Pressley, Andrew},
   title={A guide to quantum groups},
   publisher={Cambridge University Press, Cambridge},
   date={1994},
   pages={xvi+651},
   isbn={0-521-43305-3},
}

\bib{CKM}{article}{
   author={Ciocan-Fontanine, Ionu{\c{t}}},
   author={Kim, Bumsig},
   author={Maulik, Davesh},
   title={Stable quasimaps to GIT quotients},
   journal={J. Geom. Phys.},
   volume={75},
   date={2014},
   pages={17--47},
   issn={0393-0440},
}

\bib{EKP}{article}{
   author={Enriquez, B.},
   author={Khoroshkin, S.},
   author={Pakuliak, S.},
   title={Weight functions and Drinfeld currents},
   journal={Comm. Math. Phys.},
   volume={276},
   date={2007},
   number={3},
   pages={691--725},
   issn={0010-3616},
}

\bib{EFK}{book}{
   author={Etingof, Pavel I.},
   author={Frenkel, Igor B.},
   author={Kirillov, Alexander A., Jr.},
   title={Lectures on representation theory and Knizhnik-Zamolodchikov
   equations},
   series={Mathematical Surveys and Monographs},
   volume={58},
   publisher={American Mathematical Society, Providence, RI},
   date={1998},
   pages={xiv+198},
   isbn={0-8218-0496-0},
}

\bib{EV1}{article}{
   author={Etingof, Pavel},
   author={Varchenko, Alexander},
   title={Traces of intertwiners for quantum groups and difference
   equations. I},
   journal={Duke Math. J.},
   volume={104},
   date={2000},
   number={3},
   pages={391--432},
   issn={0012-7094},
}

\bib{EV}{article}{
   author={Etingof, P.},
   author={Varchenko, A.},
   title={Dynamical Weyl groups and applications},
   journal={Adv. Math.},
   volume={167},
   date={2002},
   number={1},
   pages={74--127},
   issn={0001-8708},
}

\bibitem{FJMM} 
B.~Feigin, M.~Jimbo, T.~Miwa, and E.~Mukhin, 
\emph{Quantum toroidal gl(1) and Bethe ansatz}, 
\texttt{arXiv:1502.07194}.

\bib{FMTV}{article}{
   author={Felder, G.},
   author={Markov, Y.},
   author={Tarasov, V.},
   author={Varchenko, A.},
   title={Differential equations compatible with KZ equations},
   journal={Math. Phys. Anal. Geom.},
   volume={3},
   date={2000},
   number={2},
   pages={139--177},
   issn={1385-0172},
}

\bib{FTV}{article}{
   author={Felder, G.},
   author={Tarasov, V.},
   author={Varchenko, A.},
   title={Monodromy of solutions of the elliptic quantum
   Knizhnik-Zamolodchikov-Bernard difference equations},
   journal={Internat. J. Math.},
   volume={10},
   date={1999},
   number={8},
   pages={943--975},
   issn={0129-167X},
}

\bib{FKPR}{article}{
   author={Frappat, Luc},
   author={Khoroshkin, Sergey},
   author={Pakuliak, Stanislav},
   author={Ragoucy, {\'E}ric},
   title={Bethe ansatz for the universal weight function},
   journal={Ann. Henri Poincar\'e},
   volume={10},
   date={2009},
   number={3},
   pages={513--548},
   issn={1424-0637},
}

\bib{FR}{article}{
   author={Frenkel, I. B.},
   author={Reshetikhin, N. Yu.},
   title={Quantum affine algebras and holonomic difference equations},
   journal={Comm. Math. Phys.},
   volume={146},
   date={1992},
   number={1},
   pages={1--60},
   issn={0010-3616},
}

\bib{GinzNak}{article}{
   author={Ginzburg, Victor},
   title={Lectures on Nakajima's quiver varieties},
   conference={
      title={Geometric methods in representation theory. I},
   },
   book={
      series={S\'emin. Congr.},
      volume={24},
      publisher={Soc. Math. France, Paris},
   },
   date={2012},
   pages={145--219},
}

\bib{JM}{book}{
   author={Jimbo, Michio},
   author={Miwa, Tetsuji},
   title={Algebraic analysis of solvable lattice models},
   series={CBMS Regional Conference Series in Mathematics},
   volume={85},
   publisher={Published for the Conference Board of the Mathematical
   Sciences, Washington, DC; by the American Mathematical Society,
   Providence, RI},
   date={1995},
   pages={xvi+152},
   isbn={0-8218-0320-4},
}

\bib{HL}{article}{
   author={Halpern-Leistner, Daniel},
   title={The derived category of a GIT quotient},
   journal={J. Amer. Math. Soc.},
   volume={28},
   date={2015},
   number={3},
   pages={871--912},
   issn={0894-0347},
}

\bibitem{HLMO}
D.~Halpern-Leistner, D.~Maulik and A.~Okounkov, 
in preparation. 

\bibitem{HLS}
D.~Halpern-Leistner, S.~Sam, 
\emph{Combinatorial constructions of derived equivalences},
\texttt{arXiv:1601.02030}. 

\bib{KP}{article}{
   author={Khoroshkin, Sergey},
   author={Pakuliak, Stanislav},
   title={A computation of universal weight function for quantum affine
   algebra $U\sb q(\widehat{\germ {gl}}\sb N)$},
   journal={J. Math. Kyoto Univ.},
   volume={48},
   date={2008},
   number={2},
   pages={277--321},
   issn={0023-608X},
}

\bib{KPT}{article}{
   author={Khoroshkin, S.},
   author={Pakuliak, S.},
   author={Tarasov, V.},
   title={Off-shell Bethe vectors and Drinfeld currents},
   journal={J. Geom. Phys.},
   volume={57},
   date={2007},
   number={8},
   pages={1713--1732},
   issn={0393-0440},
}

\bib{Kr1}{article}{
   author={Krichever, Igor},
   title={Obstructions to the existence of $S\sp{1}$-actions. Bordisms of
   branched coverings},
   journal={Izv. Akad. Nauk SSSR Ser. Mat.},
   volume={40},
   date={1976},
   number={4},
   pages={828--844, 950},
   issn={0373-2436},
}

\bib{Kr2}{article}{
   author={Krichever, Igor},
   title={Generalized elliptic genera and Baker-Akhiezer functions},
   journal={Mat. Zametki},
   volume={47},
   date={1990},
   number={2},
   pages={34--45, 158},
   issn={0025-567X},
   translation={
      journal={Math. Notes},
      volume={47},
      date={1990},
      number={1-2},
      pages={132--142},
      issn={0001-4346},
   },
}

\bib{KR}{article}{
   author={Kulish, P. P.},
   author={Reshetikhin, N. Yu.},
   title={Diagonalisation of ${\rm GL}(N)$ invariant transfer matrices and
   quantum $N$-wave system (Lee model)},
   journal={J. Phys. A},
   volume={16},
   date={1983},
   number={16},
   pages={L591--L596},
   issn={0305-4470},
}

\bib{KS}{article}{
   author={Kulish, P. P.},
   author={Sklyanin, E. K.},
   title={Quantum spectral transform method. Recent developments},
   conference={
      title={},
   },
   book={
      series={Lecture Notes in Phys.},
      volume={151},
      publisher={Springer, Berlin-New York},
   },
   date={1982},
   pages={61--119},
}

\bib{mnop2}{article}{
   author={Maulik, D.},
   author={Nekrasov, N.},
   author={Okounkov, A.},
   author={Pandharipande, R.},
   title={Gromov-Witten theory and Donaldson-Thomas theory. II},
   journal={Compos. Math.},
   volume={142},
   date={2006},
   number={5},
   pages={1286--1304},
   issn={0010-437X},
}

\bibitem{MO}
D.~Maulik and A.~Okounkov, 
\emph{Quantum groups and quantum cohomology}, 
\texttt{arXiv:1211.1287}. 

\bib{Matsuo}{article}{
   author={Matsuo, Atsushi},
   title={Jackson integrals of Jordan-Pochhammer type and quantum
   Knizhnik-Zamolodchikov equations},
   journal={Comm. Math. Phys.},
   volume={151},
   date={1993},
   number={2},
   pages={263--273},
   issn={0010-3616},
}

\bibitem{mn}
K.~McGerty and T.~Nevins,
\emph{Kirwan surjectivity for quiver varieties},
\texttt{arXiv:1610.08121}. 

\bib{MNS}{article}{
   author={Moore, Gregory},
   author={Nekrasov, Nikita},
   author={Shatashvili, Samson},
   title={Integrating over Higgs branches},
   journal={Comm. Math. Phys.},
   volume={209},
   date={2000},
   number={1},
   pages={97--121},
}

\bib{Nak3}{article}{
   author={Nakajima, Hiraku},
   title={Quiver varieties and finite-dimensional representations of quantum
   affine algebras},
   journal={J. Amer. Math. Soc.},
   volume={14},
   date={2001},
   number={1},
   pages={145--238},
   issn={0894-0347},
}

\bibitem{NeTh}
A.~Negu\c{t},
\emph{Quantum Algebras and Cyclic Quiver Varieties}, 
\texttt{arXiv:1504.06525}

\bib{Ninst}{article}{
   author={Nekrasov, Nikita A.},
   title={Seiberg-Witten prepotential from instanton counting},
   journal={Adv. Theor. Math. Phys.},
   volume={7},
   date={2003},
   number={5},
   pages={831--864},
}

\bibitem{NekVid1}
Nikita Nekrasov,
\emph{Bethe States As Defects In Gauge Theories},
Lecture at the Simons Center for Geometry and Physics, 
Oct.\ 1, 2013, video available from 
\texttt{http://scgp.stonybrook.edu/video\_portal/video.php?id=1593}.

\bibitem{NekVid2}
Nikita Nekrasov,
\emph{Bethe wavefunctions from gauged linear sigma models via
  Bethe/gauge correspondence}, Lecture at the Simons Center for Geometry and Physics,
Nov.\ 3, 2014, video available from 
\texttt{http://scgp.stonybrook.edu/video\_portal/video.php?id=1360}.

\bibitem{NekPrep}
Nikita Nekrasov,
in preparation.

\bib{NS1}{article}{
   author={Nekrasov, Nikita A.},
   author={Shatashvili, Samson L.},
   title={Supersymmetric vacua and Bethe ansatz},
   journal={Nuclear Phys. B Proc. Suppl.},
   volume={192/193},
   date={2009},
   pages={91--112},
   issn={0920-5632},
}

\bib{NS2}{article}{
   author={Nekrasov, Nikita A.},
   author={Shatashvili, Samson L.},
   title={Quantization of integrable systems and four dimensional gauge
   theories},
   conference={
      title={XVIth International Congress on Mathematical Physics},
   },
   book={
      publisher={World Sci. Publ., Hackensack, NJ},
   },
   date={2010},
   pages={265--289},
}

\bibitem{pcmi} A.~Okounkov, 
\emph{Lectures on K-theoretic computations in enumerative geometry},
\texttt{arXiv:1512.07363}. 

\bibitem{slc} A.~Okounkov, 
\emph{Enumerative geometry and geometric representation theory},
Proceedings of the 2015 AMS Algebraic Geomtery 
Summer Institute.

\bibitem{OS}
A.~Okounkov and A.~Smirnov, 
\emph{Quantum difference equations for Nakajima varieties},
\texttt{arXiv:1602.09007}.

\bibitem{PP1}
R.~Pandharipande and A.~Pixton,
\emph{Descendents on local curves: rationality}
 Compos.\ Math.\ \textbf{149} (2013), no.~1, 81--124. 

\bibitem{PP2}
R.~Pandharipande and A.~Pixton,
\emph{Descendent theory for stable pairs on toric 3-folds},
 J.\ Math.\ Soc.\ Japan \textbf{65} (2013), no.~4, 1337--1372.

\bibitem{PSZ}
P.~Pushkar, A.~Smirnov, and A.~Zeitlin,
\emph{Baxter Q-operator from quantum K-theory}, 
\texttt{arXiv:1612.08723}. 

\bib{Resh}{article}{
   author={Reshetikhin, N.},
   title={Jackson-type integrals, Bethe vectors, and solutions to a
   difference analog of the Knizhnik-Zamolodchikov system},
   journal={Lett. Math. Phys.},
   volume={26},
   date={1992},
   number={3},
   pages={153--165},
   issn={0377-9017},
}

\bib{RTV1}{article}{
   author={Rim{\'a}nyi, R.},
   author={Tarasov, V.},
   author={Varchenko, A.},
   title={Trigonometric weight functions as $K$-theoretic stable envelope
   maps for the cotangent bundle of a flag variety},
   journal={J. Geom. Phys.},
   volume={94},
   date={2015},
   pages={81--119},
   issn={0393-0440},
}

\bib{RTV2}{article}{
   author={Rim{\'a}nyi, R.},
   author={Tarasov, V.},
   author={Varchenko, A.},
   title={Partial flag varieties, stable envelopes, and weight functions},
   journal={Quantum Topol.},
   volume={6},
   date={2015},
   number={2},
   pages={333--364},
   issn={1663-487X},
}

\bibitem{Sh}
D.~Shenfeld,
\emph{Abelianization of Stable Envelopes in Symplectic Resolutions},
PhD thesis, Princeton, 2013. 

\bib{Slav}{article}{
   author={Slavnov, N. A.},
   title={The algebraic Bethe ansatz and quantum integrable systems},
   journal={Uspekhi Mat. Nauk},
   volume={62},
   date={2007},
   number={4(376)},
   pages={91--132},
   issn={0042-1316},
   translation={
      journal={Russian Math. Surveys},
      volume={62},
      date={2007},
      number={4},
      pages={727--766},
      issn={0036-0279},
   },
}

\bibitem{S1}
A.~Smirnov, 
\emph{Polynomials associated with fixed points on the instanton moduli
  space}, 
\texttt{arXiv:1404.5304}.

\bibitem{S2} A.~Smirnov, 
\emph{Rationality of capped descendent vertex in K-theory},
\texttt{arXiv:1612.01048}, and in preparation.

\bib{TV}{article}{
   author={Tarasov, V.},
   author={Varchenko, A.},
   title={Dynamical differential equations compatible with rational $qKZ$
   equations},
   journal={Lett. Math. Phys.},
   volume={71},
   date={2005},
   number={2},
   pages={101--108},
   issn={0377-9017},
}

\bib{TV1}{article}{
   author={Tarasov, V.},
   author={Varchenko, A.},
   title={Geometry of $q$-hypergeometric functions as a bridge between
   Yangians and quantum affine algebras},
   journal={Invent. Math.},
   volume={128},
   date={1997},
   number={3},
   pages={501--588},
   issn={0020-9910},
}

\bib{TV2}{article}{
   author={Tarasov, V.},
   author={Varchenko, A.},
   title={Geometry of $q$-hypergeometric functions, quantum affine algebras
   and elliptic quantum groups},
   journal={Ast\'erisque},
   number={246},
   date={1997},
   pages={vi+135},
   issn={0303-1179},
}

\bib{TV3}{article}{
   author={Tarasov, V.},
   author={Varchenko, A.},
   title={Difference equations compatible with trigonometric KZ differential
   equations},
   journal={Internat. Math. Res. Notices},
   date={2000},
   number={15},
   pages={801--829},
   issn={1073-7928},
}

\bib{TV4}{article}{
   author={Tarasov, V.},
   author={Varchenko, A.},
   title={Combinatorial formulae for nested Bethe vectors},
   journal={SIGMA Symmetry Integrability Geom. Methods Appl.},
   volume={9},
   date={2013},
   pages={Paper 048, 28},
   issn={1815-0659},
}

\bib{TV5}{article}{
   author={Tarasov, V.},
   author={Varchenko, A.},
   title={Jackson integral representations for solutions of the
   Knizhnik-Zamolodchikov quantum equation},
   journal={Algebra i Analiz},
   volume={6},
   date={1994},
   number={2},
   pages={90--137},
   issn={0234-0852},
   translation={
      journal={St. Petersburg Math. J.},
      volume={6},
      date={1995},
      number={2},
      pages={275--313},
      issn={1061-0022},
   },
}

	\end{biblist}
\end{bibdiv}

\end{document}